\newcommand{\eps}{\epsilon}
\def\pr{\mathbb{P}}
\def\calD{\mathcal{D}}
\def\calN{\mathcal{N}}
\newtheorem{theo}{Theorem}[section]
\newtheorem{lemma}[theo]{Lemma}
\newcounter{theoremc}
\newtheorem{theor}[theoremc]{Theorem}
\newtheorem{prop}[theoremc]{Proposition}
\newenvironment{hproof}{%
  \proof}{\endproof}
\newcounter{prob}
\newtheorem{problem}[prob]{Problem}
\newtheorem*{problem*}{Problem}
\newcounter{exp}
\newtheorem{example}[exp]{Example}
\newcounter{def}
\newtheorem{definition}[def]{Definition}
\theoremstyle{definition}
\theoremstyle{remark}
\DeclareMathOperator*{\argmax}{argmax}
\DeclareMathOperator*{\Lap}{Lap}
\def\cov{\mathrm{\widehat{\sigma^2_{x y}}}}
\def\var{\mathrm{\widehat{\sigma^2_x}}}
\def\pcov{\mathrm{\widetilde{\sigma^2_{x y}}}}
\def\pvar{\mathrm{\widetilde{\sigma^2_x}}}
\def\HS{\hspace{\fontdimen2\font}}
\definecolor{darkgreen}{rgb}{0.15,0.55,0.15}
\definecolor{darkblue}{rgb}{0.1,0.1,0.5}
\definecolor{blue}{rgb}{0.01,0.40,.8}
\definecolor{darkgreen}{rgb}{0.15,0.55,0.15}
\definecolor{mred}{rgb}{.80,.12,.30}
\definecolor{grey}{rgb}{0.5,0.5,0.5}
\definecolor{Purple}{rgb}{.75,0,.85}
\definecolor{light-gray}{gray}{0.95}
\definecolor{mid-gray}{gray}{0.85}
\definecolor{darkred}{rgb}{0.7,0.25,0.25}
\definecolor{rose}{rgb}{1.0, 0.01, 0.24}
\newcommand{\red}[1]{\textcolor{red}{#1}}
\newcommand{\blue}[1]{\textcolor{blue}{#1}}
\newcommand{\revise}[1]{#1}
\newcommand{\eat}[1]{}
\newcommand{\stitle}[1]{\vspace{2pt}\noindent\textbf{#1}}
\newcommand{\fdp}[0]{\texttt{FPM}\xspace}
\newcommand{\fdpopt}[0]{\texttt{FPM-OPT}\xspace}
\newcommand{\ldp}[0]{\texttt{LDP}\xspace}
\newcommand{\gdp}[0]{\texttt{GDP}\xspace}
\newcommand{\sdp}[0]{\texttt{SDP}\xspace}
\newcommand{\sdpone}[0]{\texttt{Shuffle-1}\xspace}
\newcommand{\sdptwo}[0]{\texttt{Shuffle-2}\xspace}
\newcommand{\sys}[0]{{\it Saibot}\xspace}
\newenvironment{myitemize}{\begin{list}{$\bullet$}{}}{\end{list}}
\author{Zezhou Huang}
\email{zh2408@columbia.edu}
\affiliation{
  \institution{Columbia University}
}
\author{Jiaxiang Liu}
\email{jl6235@columbia.edu}
\affiliation{
  \institution{Columbia University}
}
\author{Daniel Gbenga Alabi}
\email{alabid@cs.columbia.edu}
\affiliation{
  \institution{Columbia University}
}
\author{Raul Castro Fernandez}
\email{raulcf@uchicago.edu}
\affiliation{
  \institution{University of Chicago}
}
\author{Eugene Wu}
\email{ewu@cs.columbia.edu}
\affiliation{
  \institution{DSI, Columbia University}
}
\begin{document}

\title{Saibot: A Differentially Private Data Search Platform}

\begin{abstract}
Recent data search platforms use ML task-based utility measures rather than metadata-based keywords, to search large dataset corpora.   
Requesters submit a training dataset, and these platforms search for {\it augmentations}---join or union-compatible datasets---that, when used to augment the requester's dataset, most improve model (e.g., linear regression) performance.
Although effective, providers that manage personally identifiable data demand differential privacy (\texttt{DP}) guarantees before granting these platforms data access.   Unfortunately, making data search differentially private is nontrivial, as a single search can involve training and evaluating datasets hundreds or thousands of times, quickly depleting privacy budgets.  

We present \sys, a differentially private data search platform that employs  Factorized Privacy Mechanism (\fdp), a novel  \texttt{DP} mechanism, to calculate sufficient semi-ring statistics for ML over different combinations of datasets. These statistics are privatized once, and can be freely reused for the search.
This allows \sys to scale to arbitrary numbers of datasets and requests, while minimizing the amount that \texttt{DP} noise affects search results.  
We optimize the sensitivity of \fdp for common augmentation operations, and analyze its properties with respect to linear regression.  Specifically, we develop an unbiased estimator for many-to-many joins, prove its bounds, and develop an optimization to redistribute \texttt{DP} noise to minimize the impact on the model.
Our evaluation on a real-world dataset corpus of $329$ datasets demonstrates that \sys can return augmentations that achieve model accuracy within $50{-}90\%$ of non-private search, while the leading alternative \texttt{DP} mechanisms (\texttt{TPM}, \texttt{APM}, shuffling) are several orders of magnitude worse.

\end{abstract}

\maketitle



\section{Introduction}
\label{sec:intro}

Augmenting training data with additional samples or features can significantly enhance ML performance~\cite{sambasivan2021everyone}. However, sourcing such data in large corpora---public portals~\cite{nycopen,cms}, or enterprise data warehouses---is a complex task.
To address this, a new form of data search platform~\cite{santos2022sketch,chepurko2020arda,nargesian2022responsible,li2021data,kitana} is emerging, wherein a requester submits a search request comprising training and testing datasets for augmentation. The platform then finds provider datasets that augment the training dataset in a way that improves {\it utility} (e.g., ML performance).  This involves using a data discovery tool~\cite{fernandez2018aurum,castelo2021auctus} to locate a set of union- or join-compatible tables ({\it augmentations}), augmenting the training set with each candidate, and then retraining and evaluating the model to assess its {\it utility}.  The augmentations are subsequently ranked by {\it utility}.  Platforms largely differ in the discovery tool procedure, the models they support, and how they accelerate model retraining and evaluation.   Recent works~\cite{kitana,chen2017semi,santos2022sketch} suggest that using linear regression as a model proxy provides a good balance of search quality and runtime.

Unfortunately, privacy is a major barrier to sharing for many potential data providers and requesters with sensitive data (e.g., personally identifiable information (PII), and protected health information (PHI)).     In these cases, providers are legally obligated to prevent personal data leakage~\cite{EUdataregulations2018,CCPA,FERPA}.  Rather than prohibit access outright, differential privacy (\texttt{DP})~\cite{dwork2006calibrating} supports data analysis on sensitive data while bounding the degree of privacy loss based on the budget $\epsilon$ set by the data provider.  Each query on the dataset adds noise to the results, inversely proportional to the budget consumed; when $\epsilon=0$, the dataset becomes inaccessible.

Ideally, a differentially private data search platform would let providers and requesters set privacy budgets for their datasets, and enforce these budgets as new datasets and requests arrive.  
Moreover, since the platform is often a third-party service that may not be trusted by data providers (and the individuals they collect data from), it should not have access to raw data. 
Unfortunately, integrating \texttt{DP} with data search platforms is non-trivial.
To illustrate, \Cref{fig:dp_illustration} shows where existing mechanisms
would add noise in a two-level data-sharing architecture that matches many real-world settings.   In this architecture,  individuals (e.g., patients) generate sensitive data aggregated by providers/requesters (e.g., hospitals), and the search platform further aggregates their datasets.   

\revise{Global \texttt{DP} (\gdp) is a \texttt{DP} definition widely used by private DBMSes~\cite{johnson2018towards, wilson2019differentially, kotsogiannis2019privatesql}, where the employed mechanisms add noise after executing, e.g., a query over private data by a trusted central DBMS. However, when applied to data search, previous \gdp mechanisms} need to ``split the budget'' across every candidate augmentation on every request.  The budget ends up being so small that the noise drowns any signal in the data.  \revise{Further, their trust model requires the search platform, acting as the central aggregator, to be trusted, which is challenging since it is a third-party service. 
To address this, mechanisms for Local \texttt{DP} (\ldp) (e.g., randomized response~\cite{erlingsson2014rappor,ding2017collecting}) eliminate the need for a trusted data curator by privatizing} 
individual tuples. Nevertheless,
the noise required for these mechanisms can be quite large, potentially compromising data utility~\cite{wei2020federated}.
\revise{Shuffling~\cite{erlingsson2019amplification,feldman2022hiding} is a mechanism for an intermediate trust model that, instead of relying on a trusted central aggregator, requires trust in a shuffler. After privatizing tuples (using mechanisms for \ldp), the shuffler shuffles the primary keys of tuples during aggregation to disassociate them from individuals; this "amplifies privacy" by allowing each tuple to have less noise applied. Variation \sdpone shuffles at the provider/requester level but requires considerable noise for small datasets; \sdptwo shuffles within the search platform but needs to trust the platform.}
An alternative to shuffling, widely used by federated ML~\cite{wei2020federated,shokri2015privacy,truex2020ldp,zhao2020local}, is to let providers/requesters iteratively compute and privatize model gradients locally, and let an untrusted aggregator compute the final model.   However, these gradients are specific to a single augmentation's model, so the budget is still split across all candidate augmentations.
 
\begin{figure}
  \centering
      \includegraphics [width=0.45\textwidth]  {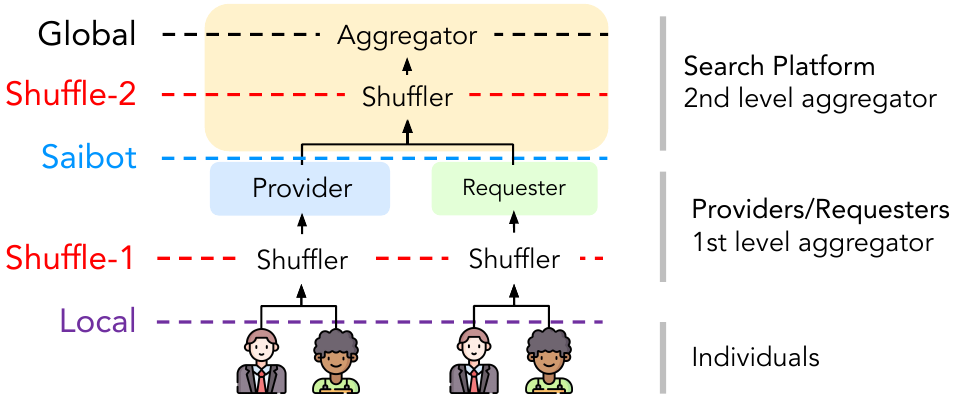}
        \vspace*{-4mm}
  \caption{\revise{Summary of DP-mechanisms under various trust models in a standard data sharing architecture: Providers/requesters collect data from individuals, and the search platform aggregates data from providers/requesters. At the extremes, mechanisms for local models introduce noise to individual tuples, whereas naive mechanisms for global models add noise to query results through a trusted $2^{nd}$-level aggregator. Mechanisms for the shuffle model introduce a shuffler at a point of aggregation (either at the $1^{st}$-level (\sdpone) or the $2^{nd}$-level (\sdptwo)). In contrast, \sys adds noise to sufficient statistics computed by providers/requesters.}}
   \vspace*{-8mm}
  \label{fig:dp_illustration}
\end{figure}

Is it possible for a DP search platform to return search results of comparable quality to non-private search, {\it and} for the platform to scale to many datasets and requests?  
We are motivated by the recent data search platform Kitana~\cite{kitana}, which uses semi-ring aggregation to quickly evaluate a candidate augmentation's {\it utility} on a linear regression model without materializing the augmented table and fully retraining it.  
These semi-rings can be computed for each dataset offline, and Kitana only needs these semi-rings to evaluate a candidate augmentation in ${\approx}1{-}5ms$, independent of the dataset size. {\it Our main observation is that these precomputed semi-rings also serve as ideal intermediates for \texttt{DP}, as they help directly estimate model parameters, can be combined over joins and unions, and can be freely reused once made private.}

This paper presents \sys, a differentially private data search platform for tabular datasets that scales to unlimited datasets and requests, returns results comparable to non-private search, and doesn't need to be trusted. 
Data providers upload their privatized datasets to the platform.   When a requester submits a privatized training dataset, the platform searches for the best combinations of privatized datasets which, when augmented with the requester's dataset, most improve the accuracy of a linear regression model. 
\revise{For the trust model, \sys assumes that the $1^{st}$-level aggregators are trusted (unlike the local model) but the $2^{nd}$-level aggregators (i.e., search platform) are not (unlike global model).}
In practice, regulations~\cite{EUdataregulations2018,CCPA,HIPAA,FERPA} mandate that the $1^{st}$-level aggregators (e.g., healthcare providers, schools) securely store individual data.
\revise{
Once \sys identifies predictive augmentations using differentially private proxy models (linear regression), it can directly return the private proxy models, although they may not be complex enough for some requesters. To address this, \sys can be integrated within a larger differentially private federated ML system~\cite{wei2020federated,truex2020ldp,zhao2020local,wang2020hybrid} to train more advanced models, like deep neural networks through differentially private gradient descent, on the identified augmentations.}.

Our key innovation is a new \texttt{DP} mechanism called \revise{{\it Factorized Privacy Mechanism} (\fdp)}, where each requester or provider computes and privatizes sufficient statistics on their own datasets based on their privacy requirements.  These sufficient statistics provide high utility, can be freely reused for ML over different augmentations, and only require the search platform to store privatized datasets. \revise{\fdp satisfies \gdp, but the randomized algorithm is applied by the
$1^{st}$-level aggregators rather than the $2^{nd}$-level ones. Note that \fdp has broader applications, not only for the data search but also for more general differentially private factorized learning.}

\revise{{\bf The main algorithmic challenge \fdp solves is to design privatized sufficient statistics for ML that are composable to support various join and union augmentations.}}
Previous works have applied \texttt{DP} to sufficient statistics for privatized linear regression~\cite{wang2018revisiting} and GLM~\cite{huggins2017pass,kulkarni2021differentially}, but these sufficient statistics can be used for only a single dataset. 
Our key insight is to design these sufficient statistics as a semi-ring~\cite{green2007provenance}, which includes addition and multiplication operators for union and join. Although  sufficient semi-ring statistics have been utilized for ML~\cite{schleich2019layered,schleich2016learning} over joins, we are the first to explore their application in a \texttt{DP} setting. The results of our real-world experiments indicate that \fdp is capable of identifying augmentations that achieve an average $r2$ score of ${\sim}50{-}90\%$ compared to non-private searches. Additionally, \fdp can support a large data corpus and unlimited requests. In contrast, the other baseline mechanisms achieve $r2$ scores ${<}0.02$.

To summarize, our contributions are as follows: 

\begin{myitemize}
\itemsep0em 
  \item \revise{We propose \fdp, a novel \texttt{DP} mechanism that privatizes reusable and composable monomials for join/aggregation augmentations. We integrate \fdp into \sys to achieve scalability for large volumes of datasets and search requests with high utility.}
  \item We optimize \fdp based on the parity of the statistics order.  For the special case of tables containing a single feature, we reduce the expected error by a further factor of $\sqrt{2}$.

  \item We provide a deep analysis of \fdp to linear regression models.   Specifically, we study the statistical bias introduced in many-to-many joins, and design an unbiased estimator to address this.   We further study its bounds on errors over the model parameters.  
    \item We design an optimization that carefully redistributes noise across sufficient statistics to improve linear regression accuracy.  
  \item We thoroughly evaluate \fdp across a real-world data corpus with ${>}300$ datasets. Our results show that \fdp can accurately identify augmentations that achieve $r2$ scores close to (${\sim} 50 {-} 90\%$) those of a non-private search. We further use ablation studies to validate our theoretical analyses and study the sensitivity.
\end{myitemize}

\stitle{Note:} The paper is self-contained. References to appendices can be disregarded or located in the technical report~\cite{tech}.

\section{Private Task-based Data Search}

In this section, we formalize the problem of task-based private data search. We start with an introduction of the non-private problem and current solutions. We then provide the primer of differential privacy, and present the differentially private data search problem.
 
\subsection{Non-Private Task-based Data Search}

We provide the background of previous task-based data search problem~\cite{santos2022sketch,chepurko2020arda,nargesian2022responsible,li2021data}, which is non-private, and previous solutions. 

\stitle{Data Model.}  
We follow the standard relational data model.  Relations are denoted as $R$, attributes as $A$, and domains as $dom(A)$. $R$'s schema is represented by $S_R=[A_1,\cdots,A_n]$, with tuples labeled as $t$ and attribute values as $t[A]$. For clarity, the schema is included in square brackets following the relation in examples $R[A_1,\cdots,A_n]$. The domain of a relation is the Cartesian product of attribute domains: $dom(R) = dom(A_1)\times\cdots\times dom(A_n)$. We consider each dataset as a relational table and use these terms interchangeably.

\stitle{Machine Learning.} A ML task $M$, like linear or logistic regression, aims to fit a good model based on feature-target attribute pairs. A training dataset $R_{train}$ comprises features $X\subset S_R$ and a target attribute $y\in S_R$. The task $M$ has a training function $M.Train(\cdot)$ that inputs $R_{train}$ and outputs a model $m$ that optimally predicts $y$ from $X$, even for unseen $X, y$ pairs. To assess $m$, $M$ uses a function $M.Evaluate(\cdot)$ which inputs $m$ and a testing dataset $R_{test}$, and outputs the model's performance on $R_{test}$, typically measured by accuracy, which is to be maximized.

\stitle{Task-based Data Search.} Given a data corpus with datasets from different providers, requesters send a request with datasets to augment and a task (e.g., ML). Task-based data search aims to identify a set of augmentable (join/union) datasets that maximize task utility.

To formalize this, let $\mathcal{R} = \{R_1, R_2,...\}$ be a data corpus with a set of relations, with each from some provider. Requester sends a request with training and testing dataset $(R_{train},R_{test})$, and chooses a model $M$. Requester's goal is to train model $M$  on $R_{train}$ and maximize its performance on $R_{test}$, which we call the task's {\it utility}.

To improve the {\it utility}, the requester aims to find a set of provider datasets in $\mathcal{R}$ that can be used to augment their data and enhance model performance. The function $Discover(R, augType)$ is used to find datasets in the data corpus $\mathcal{R}$ that can be joined or unioned with $R$, given $augType \in \{\Join, \cup\}$. The requester wants to try different combinations of subsets of these datasets to augment\footnote{For simplicity, we consider datasets that can be directly joined or unioned with requester $R_{train}$. The search space could be further expanded by, e.g., $1^{st}$ joining provider datasets; our solution can be easily adapted to this larger search space.} and find the combination that maximizes {\it utility}.

Putting everything together, the problem can be formulated as:
\noindent\begin{problem}[Task-Based Data Search.]
\label{searchprob}
For request $(R_{train},R_{test},M)$, find the set of datasets $\mathbf{R}^*_\cup, \mathbf{R}^*_\Join  \subseteq \mathcal{R}$ from data corpus such that
\begin{flalign*}
\mathbf{R}^*_\cup, \mathbf{R}^*_\Join=&\quad\argmax_{\mathbf{R}_\cup, \mathbf{R}_\Join}   M.Evaluate(m, R_{testAug})\\
s.t. &\quad \mathbf{R}_\cup \subseteq Discover(R, \cup), \mathbf{R}_\Join \subseteq Discover(R, \Join),\\
&\quad R_{trainAug} = (R_{train} \cup_{R_1\in\mathbf{R}_\cup}  R_1)\Join_{R_2\in\mathbf{R}_\Join} R_2\\
&\quad R_{testAug} = R_{test} \Join_{R\in\mathbf{R}_\Join} R\\
&\quad m = M.Train(R_{trainAug})\\
\end{flalign*}
\end{problem}
\vspace*{-6mm}
\stitle{Solutions.} 
Current task-based data search platforms~\cite{santos2022sketch,chepurko2020arda,nargesian2022responsible,li2021data} follow the architecture illustrated in black in \Cref{fig:arch}. 
Offline, when providers upload raw datasets to {\it Data storage}, the platform computes minhashes for data discovery~\cite{castelo2021auctus,fernandez2018aurum}, and sketches to accelerate retraining~\cite{santos2022sketch,chepurko2020arda,nargesian2022responsible,kitana}. 
Online, the platform solves \Cref{searchprob} for each request  $(R_{train},R_{test},M)$.   First, {\it data discovery}~\cite{fernandez2018aurum,castelo2021auctus} uses the minhashes or sketches to return a set of candidate datasets. {\it Data search} then identifies a subset that maximizes task utility. The brute-force search evaluates all possible combinations and can be expensive due to retraining costs and the large set of combinations, so approaches use various heuristics and greedy algorithms~\cite{santos2022sketch,chepurko2020arda,li2021data}.

Our work primarily builds on Kitana~\cite{kitana}, which follows the architecture in \Cref{fig:arch} and uses specialized sketches for factorized ML. factorized ML trains models over joins without materializing them, which speeds up model retraining and evaluation after any candidate augmentation. This allows Kitana to execute task-based searches much faster, while maintaining competitive task utility.  Our insight is that these sketches boost performance and act as the ideal sufficient statistics for \texttt{DP}, as detailed in \Cref{sec:suffstat}.

\subsection{Differential Privacy Primer}
\label{sec:dpback}
Before delving into our solution to differentially private dataset search, we first introduce differential privacy (\texttt{DP}).  We focus on the Gaussian mechanism, a common, straightforward technique offering comparable performance and guarantee with other baselines (e.g., it offers the same approximate \texttt{DP} by shuffling~\cite{erlingsson2019amplification}).
In practice, our solution can also support pure \texttt{DP}  by Laplace mechanism (\Cref{sec:abalationexp}), where shuffling falls short.

\stitle{Differential Privacy.} \texttt{DP}~\cite{dwork2006calibrating} is a technique used to protect reconstruction, membership, and inference attacks~\cite{dwork2017exposed} by bounding the information leakage from individual records. \texttt{DP} guarantees that the probability that an algorithm will produce the same output on two datasets that differ by only one record is bounded. Formally:

\begin{definition}[$(\epsilon,\delta)-DP$]
Let $f$ be a randomized algorithm that takes a relation $R$ as input. $f$ is $(\epsilon, \delta)-DP$ if, for all relations $R_1, R_2$ that differ by adding or removing a row, and for every set $S$ of outputs from $f$, the following holds:
$Pr[f(R_1) \in S] \leq e^{\epsilon} Pr[f(R_2) \in S] + \delta$,
where $\epsilon$ and $\delta$ are non-negative real numbers (called privacy budget). $\epsilon$ controls the level of privacy, and $\delta$ controls the level of approximation. For the special case when $\delta=0$, $(\epsilon,0)-DP$ is also called pure \texttt{DP}. 
\end{definition}

\revise{\texttt{DP} definitions can be global (\gdp) or local (\ldp) depending on inputs:
\gdp applies to randomized algorithms that process an entire relation (as an aggregator) described above. In contrast, \ldp guarantees the differential privacy of algorithms on individual tuples (or relations with a cardinality of 1) before transmitting tuples to any aggregator. As a result, \ldp algorithms can function under a weaker trust model, where no aggregator is trusted. However, this often leads to increased noise levels and reduced data utility~\cite{yang2020local}.}

There are three important theorems of \texttt{DP}:

\begin{theor}[Robustness to Post-Processing]
\label{theo:post}
Let $f$ be a randomized algorithm that provides $(\epsilon, \delta)-DP$. Let $g$ be an arbitrary function. Then, the composition $g \circ f$ provides $(\epsilon, \delta)-DP$.
\end{theor}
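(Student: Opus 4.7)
The plan is to reduce the claim to the definition of $(\epsilon,\delta)$-DP applied to $f$, by pulling the event in the output space of $g$ back through $g$ to an event in the output space of $f$. I would first handle the deterministic case, then extend to randomized $g$ by conditioning.

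First I would fix two neighboring relations $R_1, R_2$ (differing by adding or removing a single row) and an arbitrary measurable set $T$ of outputs of $g \circ f$. The goal is to show
\[
Pr[g(f(R_1)) \in T] \leq e^{\epsilon}\, Pr[g(f(R_2)) \in T] + \delta.
\]
If $g$ is deterministic, define the preimage $S := g^{-1}(T) = \{y : g(y) \in T\}$ in the output space of $f$. Then the event $\{g(f(R_i)) \in T\}$ coincides with $\{f(R_i) \in S\}$, so applying the $(\epsilon,\delta)$-DP guarantee of $f$ to the set $S$ yields exactly the desired bound.

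Next I would handle the case where $g$ is itself randomized. The standard trick is to decouple $g$'s internal randomness: write $g(y) = \tilde{g}(y, U)$ for an auxiliary random variable $U$ independent of $f$'s output, where $\tilde{g}$ is deterministic. Condition on $U = u$; for each fixed $u$, the map $y \mapsto \tilde{g}(y,u)$ is deterministic, so the previous argument gives
\[
Pr[\tilde{g}(f(R_1), u) \in T] \leq e^{\epsilon}\, Pr[\tilde{g}(f(R_2), u) \in T] + \delta.
\]
Taking expectation over $U$ (which is independent of the data) preserves the inequality by linearity, giving the claim for $g \circ f$.

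The only real subtlety, and the step I expect to be the main obstacle to state precisely, is the measurability of the preimage $g^{-1}(T)$ and the legitimacy of the randomness-decoupling representation $g(y) = \tilde{g}(y,U)$. For the applications in this paper (outputs in $\mathbb{R}^d$ with Borel sets, and $g$ Borel-measurable) these are standard measure-theoretic facts, so I would invoke them with a one-line justification rather than develop them from scratch. No properties of $f$ beyond its DP guarantee are used, which is why post-processing holds for arbitrary $g$.
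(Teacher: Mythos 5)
Your proof is correct: the preimage reduction for deterministic $g$ followed by decoupling the internal randomness of $g$ as an independent $U$ and averaging is exactly the canonical post-processing argument, and the final expectation step is valid because the $(\epsilon,\delta)$ inequality is affine in the two probabilities, so it survives the convex combination over $u$. The paper states this theorem as standard background and gives no proof of its own, so there is nothing in-paper to compare against; your argument matches the textbook proof, and your flagged caveats (measurability of $g^{-1}(T)$ and the representation $g(y)=\tilde{g}(y,U)$) are indeed the only technical points, both routine for the Borel-measurable, $\mathbb{R}^d$-valued outputs used here.
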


\begin{theor}[Sequential Composition]
Let $f_1,\dots,f_n$ be a sequence of independent algorithms that provide $(\epsilon_1, \delta_1), \dots, (\epsilon_n, \delta_n)-DP$, respectively. Then, the algorithm that applies each of them in sequence, i.e., $f_n \circ f_{n-1} \dots \circ f_1$, is $(\sum_{i=1}^n \epsilon_i, \sum_{i=1}^n \delta_i)-DP$.
\end{theor}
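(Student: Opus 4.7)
The plan is to proceed by induction on $n$, reducing everything to the base case $n=2$: once I show that whenever $f$ is $(\epsilon_f,\delta_f)$-DP and $g$ is $(\epsilon_g,\delta_g)$-DP, the composition $g\circ f$ is $(\epsilon_f+\epsilon_g,\delta_f+\delta_g)$-DP, I can iterate with $g=f_n$ and $f=f_{n-1}\circ\cdots\circ f_1$ (which is $(\sum_{i<n}\epsilon_i,\sum_{i<n}\delta_i)$-DP by the inductive hypothesis) to obtain the general claim. So the real work is at $n=2$.

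For the base case, fix neighboring inputs $R_1,R_2$ and a measurable set $S$ of outputs. First I would marginalize over the intermediate value $y=f_1(R)$ and write
\[
\Pr[(f_2\circ f_1)(R)\in S]=\int \Pr[f_2(y)\in S]\,d\mu_R(y),
\]
where $\mu_R$ denotes the distribution of $f_1(R)$. Next, I would invoke the standard equivalent characterization of $(\epsilon_1,\delta_1)$-DP to partition the intermediate space into a ``good'' region $Y$ on which the Radon--Nikodym derivative satisfies $d\mu_{R_1}/d\mu_{R_2}\le e^{\epsilon_1}$, and a ``bad'' region whose $\mu_{R_1}$-mass is at most $\delta_1$.

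On the good region, I would upper-bound $d\mu_{R_1}(y)$ by $e^{\epsilon_1}\,d\mu_{R_2}(y)$ and then apply the $(\epsilon_2,\delta_2)$-DP guarantee of $f_2$ pointwise, bounding $\Pr[f_2(y)\in S]\le e^{\epsilon_2}\Pr[f_2(y)\in S]+\delta_2$ in the appropriate neighboring sense; multiplying the two exponential factors produces $e^{\epsilon_1+\epsilon_2}$, and integrating the $\delta_2$ term against a probability measure contributes at most $\delta_2$ additively. On the bad region, I would simply use $\Pr[f_2(y)\in S]\le 1$, so the total contribution is at most $\delta_1$. Adding the two pieces yields the desired $(\epsilon_1+\epsilon_2,\delta_1+\delta_2)$-DP bound.

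The hard part will be managing the $\delta$ bookkeeping cleanly: a naive substitution of the two $(\epsilon,\delta)$ inequalities in sequence produces an unwanted cross term like $e^{\epsilon_2}\delta_1$, which would make the $\delta$ budgets multiply rather than add. Splitting the intermediate space via the privacy-loss characterization of $(\epsilon,\delta)$-DP is precisely what avoids that cross term and is the one subtle ingredient; everything else (the marginalization identity, the pointwise application of $f_2$'s guarantee, and the inductive step) is routine.
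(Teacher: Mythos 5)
The paper does not actually prove this statement: sequential composition is quoted in Section~2.2 as standard background from the differential-privacy literature, with no proof supplied. So there is no in-paper argument to compare against, and your proposal has to stand on its own. Structurally it does --- induction on $n$ with the two-mechanism case as the engine, and a good/bad split of the intermediate space to keep the $\delta$'s additive rather than letting the cross term $e^{\epsilon_2}\delta_1$ appear --- this is exactly the standard textbook route (Dwork--Roth, Thm.~3.16 and Appendix~B).

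Two points need repair, though. First, your base-case identity $\Pr[(f_2\circ f_1)(R)\in S]=\int \Pr[f_2(y)\,{\in}\,S]\,d\mu_R(y)$ has $f_2$ acting only on the intermediate output $y$; if that were literally the setting, the claim would follow from post-processing alone with no budget spent on $f_2$, and conversely the phrase ``apply the $(\epsilon_2,\delta_2)$-DP guarantee of $f_2$ pointwise'' has no content, since two values $y$ are not neighboring databases. The composition that costs budget is the one where $f_2$ re-accesses the database, i.e.\ the released object is $(f_1(R), f_2(R,f_1(R)))$; you need to write the marginalization as $\int \Pr[f_2(R,y)\in S_y]\,d\mu_R(y)$ and compare $f_2(R_1,y)$ against $f_2(R_2,y)$ for the \emph{same} $y$. (The paper's own $f_n\circ\cdots\circ f_1$ notation invites this confusion, but the proof must resolve it.) Second, the ``standard equivalent characterization'' you invoke --- a bad set of $\mu_{R_1}$-mass at most $\delta_1$ outside of which $d\mu_{R_1}/d\mu_{R_2}\le e^{\epsilon_1}$ --- is the strictly stronger \emph{probabilistic} (pointwise) DP condition, not an exact equivalence; passing from $(\epsilon,\delta)$-DP to it in general degrades the parameters. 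The rigorous versions of this argument either route through the $\delta$-approximate max divergence and its coupling lemma, or note that the concrete mechanisms in play (Gaussian, Laplace) satisfy the pointwise form directly. Either fix is routine, but as written that step is asserted rather than available.
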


\begin{theor}[Parallel Composition]
Let $dom_1, \dots, dom_n$ be $n$ disjoint subsets of $dom(R)$. Let $f_1,\dots,f_n$ be a set of independent algorithms that provide $(\epsilon_1, \delta_1), \dots, (\epsilon_n, \delta_n)-DP$  and take relations from $dom_1, \dots, dom_n$ as input, respectively. Then, the algorithm that applies them on disjoint subsets of $R$ is $(\max_{i=1}^n \epsilon_i, \max_{i=1}^n \delta_i)-DP$.
\end{theor}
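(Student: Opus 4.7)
The plan is to reduce parallel composition to a single-mechanism DP guarantee by exploiting disjointness of the partition. First, I would fix two neighboring relations $R, R'$ that differ by the addition or removal of one row $t$. Since $dom_1,\dots,dom_n$ are pairwise disjoint, $t$ belongs to exactly one $dom_k$. Defining $R_i := R \cap dom_i$ and $R'_i := R' \cap dom_i$, I would observe that $R_i = R'_i$ for every $i \neq k$, whereas $R_k$ and $R'_k$ differ by exactly one row and are therefore neighbors in the sense required by $f_k$'s DP guarantee.

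Next, I would write the composed algorithm as $F(R) := (f_1(R_1),\dots,f_n(R_n))$, whose output lies in the product of the $f_i$'s ranges. For any measurable set $S$ in this product space and any fixed tuple $y_{-k} := (y_i)_{i\neq k}$, define the slice $S_{y_{-k}} := \{\, y_k : (y_1,\dots,y_n) \in S \,\}$. By independence of the $f_i$, a Fubini-style decomposition gives
\begin{equation*}
\Pr[F(R) \in S] \;=\; \int \Pr[f_k(R_k) \in S_{y_{-k}}] \prod_{i \neq k} d\mu_i(y_i),
\end{equation*}
where $\mu_i$ is the distribution of $f_i(R_i) = f_i(R'_i)$, which is identical under $R$ and $R'$ for $i\neq k$.

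Then I would apply the $(\epsilon_k, \delta_k)$-DP guarantee of $f_k$ to the integrand against the neighbors $R_k, R'_k$, obtaining $\Pr[f_k(R_k) \in S_{y_{-k}}] \leq e^{\epsilon_k}\Pr[f_k(R'_k) \in S_{y_{-k}}] + \delta_k$. Substituting this bound, pulling $e^{\epsilon_k}$ out, and re-assembling the right-hand integral back into $\Pr[F(R') \in S]$ yields $\Pr[F(R) \in S] \leq e^{\epsilon_k}\Pr[F(R') \in S] + \delta_k \leq e^{\max_i \epsilon_i}\Pr[F(R') \in S] + \max_i \delta_i$, which is the desired $(\max_i \epsilon_i, \max_i \delta_i)$-DP bound.

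The subtle step that I expect to require the most care is the $\delta$ accounting under the product-measure decomposition: a priori, one might fear the additive $\delta_k$ term inflates after integrating against the other marginals. It does not, because $\prod_{i\neq k}\mu_i$ is a probability measure of total mass one, so integrating the constant $\delta_k$ simply returns $\delta_k$. Verifying the measurability of the slices $S_{y_{-k}}$ with respect to $f_k$'s output $\sigma$-algebra is routine but should be mentioned for rigor. Everything else reduces to standard manipulations of the DP inequality.
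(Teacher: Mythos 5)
The paper states Parallel Composition as standard background in its DP primer and supplies no proof of its own, so there is nothing to compare against; your argument is the classical one and is correct. The three ingredients you identify --- the changed row falling into exactly one $dom_k$ by disjointness (so $R_i = R'_i$ for $i \neq k$), the Fubini-style conditioning on the unchanged coordinates, and the observation that the additive $\delta_k$ survives integration because $\prod_{i\neq k}\mu_i$ has total mass one --- are exactly what is needed for the $(\max_i \epsilon_i, \max_i \delta_i)$ bound. One caveat worth a sentence if you write this up: your proof, like the theorem statement, relies on the add/remove neighboring convention of the paper's Definition of $(\epsilon,\delta)$-DP; under the bounded-DP convention the paper adopts later (one row's data modified), a modification could move a record from $dom_k$ to $dom_j$ and perturb two partitions at once, which would degrade the guarantee to a sum of two of the $\epsilon_i$'s rather than a single maximum.
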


To ensure $(\epsilon,\delta)-DP$ when $\mathcal{Q}$ queries need to be executed, the privacy budget $(\epsilon,\delta)$ can be split among the queries using sequential composition, such as allocating $(\epsilon/\mathcal{Q},\delta/\mathcal{Q})$ for each query. This work employs (basic) sequential composition for simplicity, but it could be further optimized by advanced composition~\cite{dwork2010boosting}.

\stitle{Gaussian Mechanism.} The Gaussian mechanism~\cite{dwork2006our} adds noise to a query function to satisfy $(\epsilon, \delta)$-differential privacy.  Formally:
\vspace*{-2mm}
\begin{theor}[Gaussian Mechanism.]
Given $\eps, \delta \in (0, 1]$, let query $q$ be a function that takes $R$ as input and outputs a vector of real numbers. The Gaussian mechanism independently adds random noise to each output to satisfy $(\epsilon, \delta)$-differential privacy:
$q'(R) = q(R) + \mathcal{N}(0, \sigma^2)$,
where $\mathcal{N}(0, \sigma^2)$ denotes a Gaussian distribution with mean 0 and standard deviation $\sigma = \sqrt{2\ln(1.25/\delta)}\Delta_ q/\epsilon$. $\Delta_q$ is the $l_2$-sensitivity of $q$ defined as: for all possible neighbouring relations $R_1, R_2$, $\Delta_q$ is the maximum $\ell_2$ distance of $q$ outputs $\|q(R_1) -q(R_2) \|_2$.
\end{theor}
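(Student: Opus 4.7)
The plan is to reduce the vector-valued query to a one-dimensional problem using spherical symmetry of the Gaussian noise, then bound the privacy loss random variable via a Gaussian tail inequality. Fix neighbouring relations $R_1, R_2$, write $\mu_i = q(R_i)$, and let $v = \mu_2 - \mu_1$ so that $\|v\|_2 \le \Delta_q$. Since the added noise $\mathcal{N}(0,\sigma^2 I)$ is rotationally invariant, I can pick an orthonormal basis whose first axis is aligned with $v$; then only the first coordinate carries any information distinguishing $q'(R_1)$ from $q'(R_2)$, and the problem collapses to comparing $\mathcal{N}(0,\sigma^2)$ against $\mathcal{N}(\|v\|_2,\sigma^2)$ on $\mathbb{R}$.

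Next I would introduce the privacy loss random variable $L(x) = \ln \tfrac{p_1(x)}{p_2(x)}$, where $p_i$ is the density of $q'(R_i)$. A direct expansion of the log ratio of two Gaussians gives
\[
L(x) \;=\; \frac{1}{2\sigma^2}\bigl(\|x-\mu_2\|_2^2 - \|x-\mu_1\|_2^2\bigr) \;=\; \frac{\langle x-\mu_1, v\rangle}{\sigma^2} + \frac{\|v\|_2^2}{2\sigma^2}.
\]
When $x \sim p_1$, the inner product $\langle x-\mu_1, v\rangle$ is $\mathcal{N}(0, \sigma^2\|v\|_2^2)$, so $L$ itself is Gaussian with mean $\|v\|_2^2/(2\sigma^2)$ and variance $\|v\|_2^2/\sigma^2$. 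This is the key structural fact that makes the Gaussian mechanism tractable.

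I would then show that with $\sigma = \sqrt{2\ln(1.25/\delta)}\,\Delta_q/\epsilon$, the tail bound $\Pr_{x \sim p_1}[L(x) > \epsilon] \le \delta$ holds. This reduces to a computation on the complementary error function: bounding $\Phi^c(t)$ for the standardized threshold $t = (\epsilon - \|v\|_2^2/(2\sigma^2))\sigma/\|v\|_2$, and using $\|v\|_2 \le \Delta_q$ together with $\epsilon,\delta \in (0,1]$ to verify the inequality. Finally, I would close the proof with the standard two-region argument: for any measurable output set $S$, split $S$ into the ``good'' part where $L \le \epsilon$ (which contributes at most $e^{\epsilon}\Pr[q'(R_2) \in S]$) and the ``bad'' part where $L > \epsilon$ (which contributes at most $\delta$ by the tail bound).

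The main obstacle will be the tail-bound calculation: the constant $\sqrt{2\ln(1.25/\delta)}$ is slightly slack relative to the exact Mills-ratio bound, and it is easy to lose the right constant by an unguarded inequality. I would therefore carry out the tail estimate carefully in the regime $\delta \le 1$, possibly splitting into cases depending on whether $\epsilon \sigma/\|v\|_2 - \|v\|_2/(2\sigma)$ is large, so that the chosen $\sigma$ truly suffices. Everything else—reduction to one dimension, computation of the privacy-loss distribution, and the final group-by-$L$ integration—is routine once this inequality is established.
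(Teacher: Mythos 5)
The paper does not prove this statement: it is quoted as a known background result and attributed to Dwork et al.~\cite{dwork2006our}, so there is no in-paper argument to compare against. Your outline is the canonical proof of that cited result (reduction to one dimension by rotational invariance, the Gaussian privacy-loss random variable, the tail bound that fixes the constant $\sqrt{2\ln(1.25/\delta)}$, and the two-region split over $\{L\le\epsilon\}$ and $\{L>\epsilon\}$), and it is sound. One small slip: expanding $\|x-\mu_2\|_2^2-\|x-\mu_1\|_2^2$ with $v=\mu_2-\mu_1$ gives $L(x)=-\langle x-\mu_1,v\rangle/\sigma^2+\|v\|_2^2/(2\sigma^2)$, i.e.\ the linear term should carry a minus sign; since $\langle x-\mu_1,v\rangle$ is a centered Gaussian under $x\sim p_1$, this does not change the distribution of $L$ or any subsequent step. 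You are also right that the only delicate point is the tail computation certifying $\Pr[L>\epsilon]\le\delta$ for the stated $\sigma$, which requires the case analysis you describe and is exactly where the $1.25$ enters.
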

\vspace*{-2mm}
Different definitions exist for neighbouring relations (and can be extended to multi-relations). We adopt bounded \texttt{DP}~\cite{Alabi22}, where neighbouring relations $R_1, R_2$ have identical row numbers, but one row's data differ; our system can be readily adapted for other definitions (e.g., unbounded \texttt{DP} where row numbers differ).

\subsection{Private Task-based Data Search}
\label{sec:problem}

\begin{figure}
  \centering
      \includegraphics [width=0.4\textwidth]  {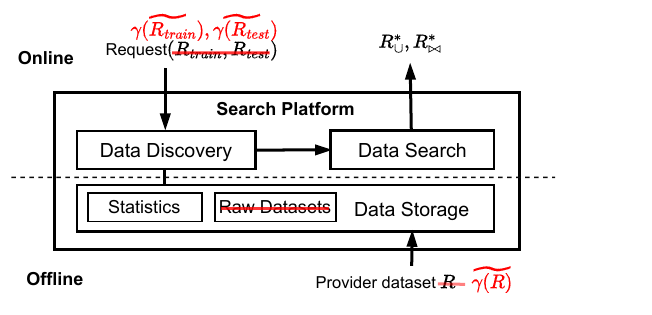}
      \vspace*{-5mm}
  \caption{\sys architecture. 
  Previous data search platforms (black) store raw datasets in data storage, use data discovery to identify augmentable datasets, and search the datasets for task improvement. 
  To ensure privacy, \sys additionally applies \fdp (\red{red}) to compute sufficient semi-ring statistics, that are aggregated ($\gamma$) and privatized ($\sim$) before being sent to the search platform. These statistics can support join and union queries to train and evaluate ML as post-processing.}
  \vspace*{-4mm}
  \label{fig:arch}
\end{figure}

We first lay out the privacy requirements based on the criteria (\Cref{sec:intro}) and motivated by real-world use cases. Then, we define the differentially private data search problem, and discuss the challenges and the intuition for solutions.

\stitle{\revise{Trust Model.}} 
We adopt a standard two-level aggregator setting illustrated in \Cref{fig:trust_model}: the $1^{st}$-level aggregators are providers/requesters (e.g., hospitals, schools), and the $2^{nd}$-level aggregator is the search platform.
Individuals share data with their direct $1^{st}$-level aggregator, who is trusted (e.g., a  hospital collects data from patients and stores them securely). However, they don't trust other non-direct $1^{st}$-level aggregators or the $2^{nd}$-level aggregators (e.g., patients don't trust other hospitals and the search platform).

\revise{Our trust model sits between the global model (by \gdp) and local model (by \ldp): Previous global model~\cite{johnson2018towards, wilson2019differentially, kotsogiannis2019privatesql} assumes that the central data curator ($2^{nd}$-level aggregator) is trusted. On the contrary, the local model assumes no trusted aggregators. In contrast to the shuffle model~\cite{erlingsson2019amplification,feldman2022hiding} which requires a trusted shuffler at either $1^{st}$-(\sdpone, similar to ours) or $2^{nd}$-level (\sdptwo, similar to the global model), we don't rely on any trusted shuffler.} In practice, we believe our trust model fits the structure of many organizations, where individuals solely trust their immediate data aggregator (like a hospital or service provider), but do not trust any other aggregators. Further, regulatory requirements~\cite{HIPAA,CCPA} place privacy protection requirements on the $1^{st}$-level aggregator.   

\stitle{Privacy Requirement.}
Providers and requesters hope to disclose datasets to the malicious search platform for augmentation. Each provider or requester sets a \texttt{DP} budget $(\epsilon, \delta)$ for each of their datasets, which is independent of other datasets and the search platform.
As per previous works~\cite{near2021differential,wu2017achieving}, we assume that each individual contributes to exactly one row of one dataset.
In line with prior studies~\cite{wilson2019differentially,kotsogiannis2019privatesql,johnson2018towards},  
 we assume that the schemas and the domains of join keys (as group-by attributes) are public.  

The {\bf differentially private task-based data search} problem is then defined as \Cref{searchprob}, adhering to the above trust model and satisfying the privacy requirements.

\begin{figure}
  \centering
      \includegraphics [width=0.4\textwidth]  {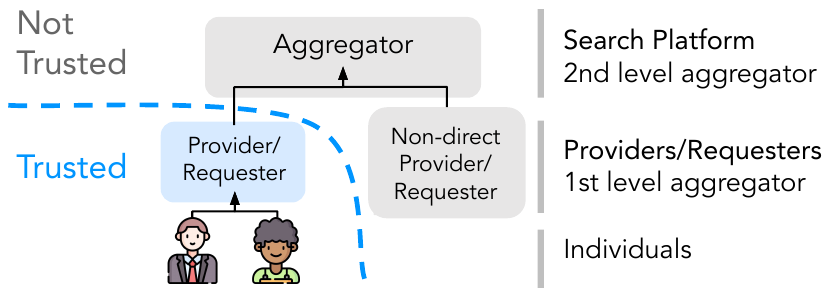}
      \vspace*{-4mm}
  \caption{\revise{Illustration of \sys trust model, where individuals only trust the  direct $1^{st}$-level aggregator, and not any others. }}
   \vspace*{-6mm}
  \label{fig:trust_model}
\end{figure}

\vspace*{-2mm}
\begin{example}
\label{example1}
\revise{Fitbit~\cite{fitbit}, a mobile health app, gathers health data from individuals and is trusted by individuals to handle sensitive information responsibly. To enhance the accuracy of its ML recommender, Fitbit plans to share data with a search platform (as requesters) but also wants to protect sensitive health data. Upon obtaining consent from individuals, Fitbit employs \texttt{DP} to privatize each dataset and uses \sys to search for valuable augmentations.
}
\end{example}
\vspace*{-2mm}

Private task-based data search is particularly challenging because, even for a single request, it requires model retraining over a combinatorially large space of augmented datasets created by joining and unioning candidate datasets. How to avoid exhausting the requester's and the providers' privacy budgets?   How can massive datasets and requests be scaled without degrading search quality?   Is there a one-time differentially private, yet universally useful intermediate representation~\cite{hardt2010multiplicative,blum2013learning}?

We draw inspiration from Kitana~\cite{kitana} which uses factorized linear regression to expedite data search.
Kitana computes the gram matrix semi-ring (\Cref{s:backgroundmsgpassing}) for each dataset, allowing fast join/union with a candidate dataset and evaluation of the linear regression accuracy. While semi-rings were initially used for performance, they also make an ideal intermediate representation for \texttt{DP}. Thus, in the next section, we design \fdp to privatize sufficient semi-ring statistics to support private ML over joins and unions.

\section{Factorized Privacy Mechanism}

In this section, we introduce Factorized Privacy Mechanism (\fdp), which privatizes sufficient semi-ring statistics. 
We start with the factorized ML background, extend it to monomial semi-ring,  present our main mechanism algorithms, and analyze its errors.

\subsection{Factorized Machine Learning Primer} \label{s:backgroundmsgpassing}

We start with the fundamental concepts of annotated relations and aggregation pushdown, then introduce  factorized ML~\cite{abo2016faq,olteanu2015size}.

\stitle{Annotated Relations.} The annotated relational model~\cite{green2007provenance} maps $t\in R$ to a commutative semi-ring $(D, +, \times, 0, 1)$, where $D$ is  a set, $+$ and $\times$ are commutative binary operators closed over $D$, and $0/1$ are zero/unit elements. An annotation for $t \in R$ is denoted as $R(t)$. Semi-ring annotation expresses various aggregations. For example, the natural numbers semi-ring expresses count aggregations.

\stitle{Semi-ring Aggregation Query.} Semi-ring aggregation queries can now be reformulated using annotated relations by translating group-by, union, and join operations into addition ($+$) and multiplication ($\times$) operations over the semi-ring annotations, respectively.
\begin{align*}
  (\gamma_\mathbf{A} R)(t) = & \sum \{R(t_1) | \HS t_ 1 \in R , t = \pi_{\mathbf{A}} (t_1 )\} \\
  (R_1\cup R_2)(t) =& \HS R_1(t) + R_2(t)  \\
(R_1\Join R_2)(t) =& \HS R_1(\pi_{S_{R_1}} (t)) \times R_2(\pi_{S_{R_2}} (t)) 
\end{align*}
\noindent (1) The annotation for group-by $\gamma_\mathbf{A} R$ is the sum of the annotations within the group. (2) The annotation for  union $R_1 \cup R_2$ is the sum of annotations in  $R_1$ and $R_2$. (3) The annotation for  join $R_1 \Join R_2$ is the product of annotations from contributing tuples in $R_1$ and $R_2$.

\stitle{Aggregation Pushdown.}
The optimization of factorized ML~\cite{abo2016faq,schleich2016learning} involves the distribution of aggregations $\gamma$ (additions) through joins $\Join$ (multiplications). For example, consider the query $\gamma_D (R_1[A,B] \Join R_2[B,C] \Join R_3[C,D])$. Rather than  applying $\gamma$ on the join (which is $O(n^3)$ where $n$ is relation size), $\gamma$ can be performed on $R$ before $\Join$ with $S$, and this process can be repeated two more times (in $O(n)$):
$$\gamma_{D} (\gamma_{C} (\gamma_{B} (R_1[A,B])\Join R_2[B,C]) \Join R_3[C,D])$$
The associativity of additions can be similarly exploited for union: 
$$\gamma_A (R_1[A,B] \cup R_2[A,B]) = \gamma_A (R_1[A,B]) \cup \gamma_A (R_2[A,B])$$

\stitle{Factorized Linear Regression.} The fundamental optimization of factorized ML is aggregation pushdown, but different semi-rings are used for different models. We use linear regression as an example.

We start with an overview of linear regression and its sufficient statistics. Given the training data  $\mathbf{X} \in \mathbb{R}^{n \times m}$, and the target variable $\textbf{y} \in \mathbb{R}^{n\times 1}$, the goal is to find parameters $\theta \in \mathbb{R}^{m \times 1}$ that minimize the square loss $\theta^* = argmin_{\theta}\Vert \textbf{y} - \mathbf{X} \theta\Vert^2$, yielding a closed-form solution $\theta^*{=}(\mathbf{X}^T\mathbf{X})^{-1}\mathbf{X}^T\textbf{y}$. Including the target variable as a special feature and appending it to $\mathbf{X}$ for $\mathbf{X}' {=} [\mathbf{X} \ | \ \textbf{y}]$, we find that $\mathbf{X}'^T\mathbf{X}'{\in}\mathbb{R}^{m' \times m'}$, where $m'=m+1$, is the core sufficient statistics to compute, where each cell represents the sum of products between feature pairs.

We can compute $\mathbf{X}'^T\mathbf{X}'$ over the join $R_{\Join} = R_1 \Join ... \Join R_k$ by the covariance matrix semi-ring~\cite{schleich2016learning}.  For $m'$ features, the semi-ring is defined as a triple $(c,\mathbf{s},\mathbf{Q})\in(\mathbb{Z},\mathbb{R}^{m'}, \mathbb{R}^{m' \times m'})$, which contains the count, sums, and sums of pairwise products respectively. The zero and one elements are $\mathbf{0} = (0,\mathbf{0}^m, \mathbf{0}^{m' \times m'})$ and $\mathbf{1} = (1,\mathbf{0}^{m'}, \mathbf{0}^{m' \times m'})$.  $+$ and $\times$ between two annotations $a$ and $b$ are defined as:
\begin{align*}
a + b =&  (c_a + c_b,\mathbf{s}_a + \mathbf{s}_b, \mathbf{Q}_a + \mathbf{Q}_b) \\
a \times b =& (c_a c_b,c_b\mathbf{s}_a + c_a\mathbf{s}_b, c_b\mathbf{Q}_a + c_a\mathbf{Q}_b + \mathbf{s}_a \mathbf{s}_b^T +\mathbf{s}_b \mathbf{s}_a^T) 
\end{align*}
Then, computing $\mathbf{X}'^T\mathbf{X}'$ is reduced to executing $\gamma(R_1 \Join ... \Join R_k)$, where aggregation can be pushed down as discussed before.

\begin{figure}
  \centering
      \includegraphics [width=0.4\textwidth]  {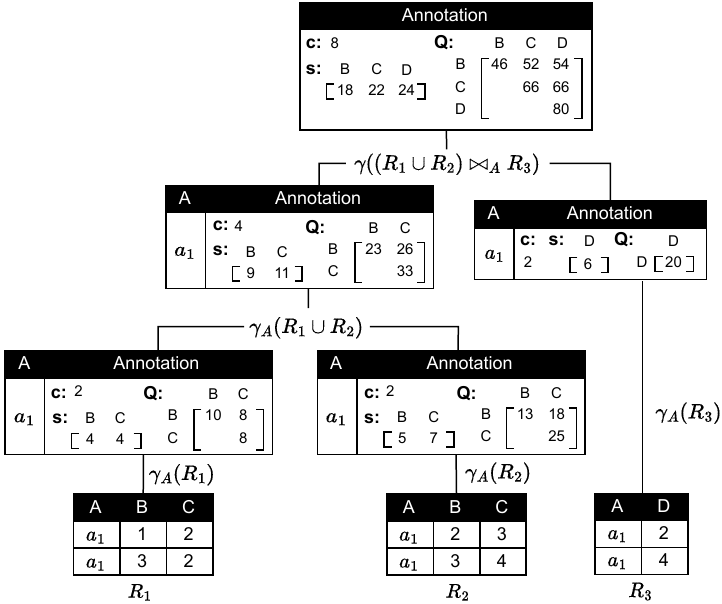}
      \vspace*{-3mm}
  \caption{Optimized query plan of $\gamma((R_1\cup R_2)\Join_A R_3)$ for factorized ML. Aggregations are pushed down before joins.}
  \vspace*{-3mm}
  \label{fig:factorizedmlexp}
\end{figure}

\begin{figure}
  \centering
      \includegraphics [width=0.45\textwidth]  {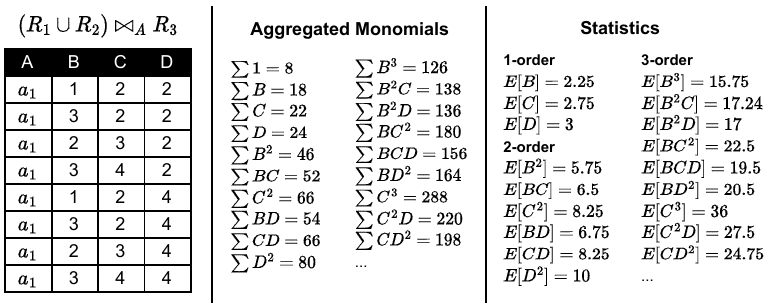}
      \vspace*{-3mm}
  \caption{\revise{Aggregated monomials and statistics.}}
  \vspace*{-4mm}
  \label{fig:materialized}
\end{figure}

\begin{example}
\label{example:fac}
\revise{Consider  $R_1, R_2, R_3$ in \Cref{fig:factorizedmlexp}. We aim to train linear regression on $(R_1{\cup} R_2){\Join_A} R_3$ using D as the feature and C as the target variable.
The naive solution is to first materialize the union and join results (\Cref{fig:materialized}) and then compute $\mathbf{X}'^T\mathbf{X}'$. 
Using factorized linear regression, we can optimize the query plan (\Cref{fig:factorizedmlexp}) by pushing down aggregations: $\gamma\left( (\gamma_A(R_1){\cup}\gamma_A(R_2)){\Join_A\gamma_A(R_3)} \right)$. This approach yields the same result as the naive solution, but avoids the costly materialization.}
 We use the aggregates to fit the linear regression:
{\small\begin{equation*}
    \theta = (\mathbf{X}^T\mathbf{X})^{-1}\mathbf{X}^T\textbf{y} =
    \begin{bmatrix}
    \sum D^2 & \sum D \\
    \sum D & \sum 1 
    \end{bmatrix}^{-1} \begin{bmatrix}
    \sum CD  \\
    \sum C
    \end{bmatrix} = \begin{bmatrix}
    80 & 24 \\
    24 & 8 
    \end{bmatrix}^{-1} \begin{bmatrix}
    66  \\
    22
    \end{bmatrix}
\end{equation*}}
\end{example}

After obtaining the model parameters $\theta$, the model performance can also be evaluated. For square loss, $\sum (y - x^T\theta)^2 = \sum (y^2 - 2\theta^Tx y + \theta^T xx^T \theta) = \textbf{y}^T\textbf{y} - 2\theta^T \textbf{X}^T\textbf{y} + \theta^T \textbf{X}^T\textbf{X} \theta$. The final aggregation result provides the necessary statistics to compute this expression.

\subsection{Monomial Semi-ring}
\label{sec:suffstat}

This section introduces sufficient statistics as vectors of monoids and extends it with semi-ring operations $+$ and $\times$. This helps bridge ideas from two communities---semi-rings from the factorized ML literature that train models over joins and unions, but primarily focused on non-private linear regression, and privatized sufficient statistics from the ML literature~\cite{huggins2017pass,kulkarni2021differentially,wang2018revisiting} that approximate generalized linear models, but do not support joins and unions.  
 We are the first to explicitly extend semi-ring from gram matrix (linear regression) to higher order monomial (generalized linear models).
\revise{This section focuses on the semi-ring design of monomials to support join and union operations {\it without \texttt{DP}}. In the next section, we introduce \fdp, a mechanism to privatize these monomials for \texttt{DP}.}

We first define the $k$-order monomial~\cite{huggins2017pass} in sufficient statistics:
\begin{definition}[$k$-order Monomial]
\label{def:monomial}
Given n random variables $f_1, f_2, ..., f_n$, the $k$-order monomials are random variables of monomials of the form $p =f_1^{k_1} f_2^{k_2}... f_n^{k_n}$, 
where $k_1, k_2 ..., k_n$ are $n$ non-negative integers such that $\sum_{i=1}^n k_i = k$.
\end{definition}

The core statistics to compute for ML are the expected value of each monomial $E[p]$. For example, $1$-order monomials estimate means, $1,2$-order monomials estimate covariance (core sufficient statistics for linear regression), and $1,2,3$-order monomials estimate skewness. Moreover, a generalized linear model can be approximated by high-order monomials using Taylor series expansions~\cite{huggins2017pass}.

\begin{example}
\label{exp:monomial}
    \revise{Consider the relation in \Cref{fig:materialized} (left) and random variables $B,C,D$. The  1-order monomials are $B, C, D$, the 2-order monomials are $B^2, BC, C^2, BD, CD, D^2$, and the 3-order monomials are $B^3,B^2C,B^2D,BC^2,BCD,...$. The statistics (right) are
    the expected monomials when the relation is the population, and can be derived from the aggregated monomials (middle). The 1,2-order statistics are the sufficient statistics for linear regression training (\Cref{example:fac}).}
\end{example}

\revise{Instead of computing statistics over join and union through costly materialization and subsequent aggregation, factorized linear regression utilizes semi-ring operators for $+$ and $\times$ to push down the aggregation of 1,2-order statistics. We extend this concept by defining operators for a $k$-order monomial semi-ring, thus generalizing factorized linear regression (2-order monomial semi-ring).}

\begin{definition}[$k$-order monomial semi-ring]
Given $m$ features $f_1, f_2, ..., f_m$, the $k$-order monomial semi-ring has domain of a vector with size $\frac{1-m^k}{1-m}$ for $m{\ge}2$, and $1{+}k$ for $m{=}1$. The domain  breaks into $k{+}1$ subvectors $[s_0, s_1, ..., s_k]$ where $s_i$ is a vector of size $m^i$. Then, given two semi-ring element $a = [s^a_0, s^a_1, ..., s^a_k]$ and $b = [s^b_0, s^b_1, ..., s^b_k]$, let:
\begin{align*}
a + b =&  [s^a_0 + s^b_0, s^a_1 + s^b_1, ..., s^a_k+s^b_k] \\
a \times b =& [s^a_0 \otimes s^b_0, \sum_{i=0}^1 s^a_i \otimes s^b_{1-i}, ..., \sum_{i=0}^k s^a_i \otimes s^b_{k-i}]
\end{align*}
where $\otimes: R^p \times R^q \rightarrow R^{pq}$ is the tensor product defined as:
for $\mathbf{a} = [a_1, a_2, ..., a_p]$ and $\mathbf{b} =[b_1, b_2, ..., b_q]$,  tensor product computes the pairwise product $\mathbf{a} \otimes \mathbf{b} = [a_1b_1, a_1b_2, ..., a_pb_1, a_pb_2, ..., a_pb_q]$.

The zero element is a vector of all zeroes, and the one element is a vector with non-zero $s_0 = [1]$, but the rest as all zeroes.
\end{definition}

\begin{figure}
  \centering
      \includegraphics [width=0.4\textwidth] {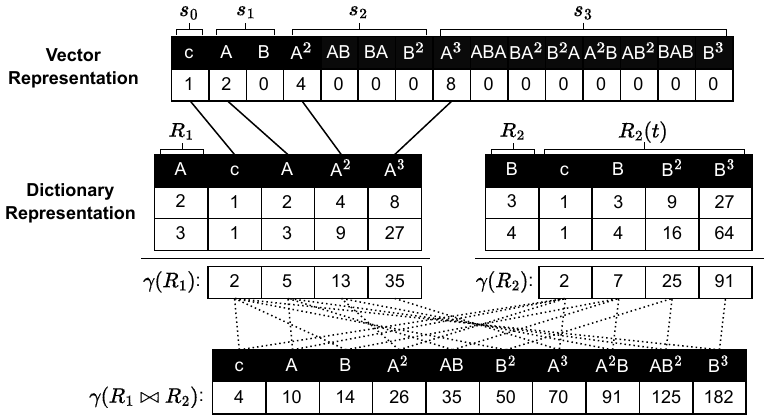}
      \vspace*{-2mm}
  \caption{Aggregation of $3$-order monomial semi-ring over join: $\gamma(R_1[A] \Join R_2[B])$.  Each row is one tuple, and we show the vector representation for the first tuple in $R_1$.  Dictionary representation removes redundancy and sparsity. Dotted lines map the contributing components to aggregated results.}
  \vspace*{-6mm}
  \label{join_example}
\end{figure}

Intuitively, each subvector $s^a_k$ holds the $k$-order monomials with a size $m^k$, as there are $m^k$ possible permutations with repetition.
In order to compute statistics using a $k$-order monomial semi-ring, we annotate $R$ by assigning to each tuple $t$ its monomials (non-existing features are considered to be all zeros).  
Note that, while this vector representation provides a straightforward way to define semi-rings for arbitrary orders, it is inherently inefficient and can be optimized by the dictionary representation discussed next.

\stitle{Dictionary Representation.} Vector representation has redundancies (e.g., $f_1f_2 = f_2f_1$) and sparsity (nonexistent features are zeros). Dictionary representations~\cite{khamis2018ac} help reduce redundancy: monomials serve as keys to deduplicate, and monomials with zeros are not materialized. 
We next provide an example of semi-ring operations  using the dictionary representation for join-aggregation:

\begin{example} Consider two relations of a single feature $R_1[A] = [2,3]$ and $R_2[B] = [3,4]$, and the aggregation query $\gamma(R_1\Join R_2)$ for $3$-order monomial semi-ring. \Cref{join_example} illustrates the annotated relations and the query processing. To start, the aggregations are pushed down by summing each monomial. Next, the monomials are combined according to the multiplication operator for join.
\end{example}

Assuming the join result, $R_\Join$, as the population, we can use the aggregated monomials to compute statistics (i.e., the expected monomials). Let $s{=}\gamma(R_\Join)$ be the aggregated monomial semi-ring. Then, for monomial $p$: $E[p] = s[p]/s[c]$, where $4$ is the count ($0$-order monomial). For example, in \Cref{join_example}, $E[AB]{=}s[AB]/s[c] {=} 35/4\\ {=} 8.75$.
\revise{The aggregated monomials comprise count and sum aggregations over the base tables, which can be efficiently computed by requesters/providers using SQL  queries. Further, they serve as an ideal intermediate for \texttt{DP} due to their reusability, as discussed next.}

\subsection{\fdp Mechanism}
\label{sec:fdp}

\revise{In this section, we present the Factorized Privacy Mechanism (\fdp) which applies the  Gaussian mechanism to the aggregated monomials discussed in the previous section to support differentially private data search (\Cref{sec:problem}) while maintaining high utility. The primary algorithmic challenge \fdp addresses is designing sufficient statistics that are composable (through semi-ring operators) and reusable (as post-processing without additional privacy cost) to support ML across various join and union augmentations.
}

We make the following simplifications: (1). Features consist only of numerical attributes, and join keys consist only of categorical attributes. \revise{\Cref{sec:system} describes preprocessing to support categorical features.}
(2).
Group-by operator $\gamma$ has been extended to annotate group-by keys without tuples with zero elements (group-by attribute domains are assumed public in \Cref{sec:problem}).
(3). Datasets are preprocessed so that the $\ell_2$ norm of the features in each tuple is bounded by a constant value $B$, following previous works~\cite{wang2018revisiting,dwork2014analyze}.

\stitle{Algorithms.} The \fdp mechanism, detailed in \Cref{alg:fdp}, is applied locally by either the requester or provider before dataset upload to the search platform. It uses as inputs: (1) the relation $R$ to be privatized  (2) the join key $A$\footnote{$A$ could be composite. To support multiple join keys, the \texttt{DP} budget can be split among different key combinations. Additionally, optimization techniques can be applied to take advantage of the correlations between join keys~\cite{qardaji2014priview}.}, 
which $= null$ if $R$ is only for union, (3) the order of monomials, $k$, based on the model to support, and (4) the \texttt{DP} budget $(\epsilon,\delta)$ for $R$. 
\revise{\fdp computes locally aggregated monomials $\gamma(R)$ and applies the Gaussian mechanism to these, with sensitivity optimized based on order parity and feature count (line 2, 7):  For even-order monomials, sensitivity is reduced by $\sqrt{2}$, and if there's only one feature, sensitivity is reduced by another $\sqrt{2}$.
}

\begin{theor}
\label{theory:fdp}
\fdp is $(\epsilon,\delta)-DP$.
\end{theor}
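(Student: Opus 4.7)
The plan is to reduce the claim to the Gaussian Mechanism theorem by producing an $\ell_2$-sensitivity bound for the aggregated monomial vector $\gamma(R)$ under the bounded-DP notion of neighbouring relations. Since \fdp is executed locally by the trusted $1^{st}$-level aggregator and every subsequent semi-ring combination, statistics reconstruction, and model fit on the search platform is pure post-processing, the work reduces to verifying that the noise scale used inside \Cref{alg:fdp} matches $\sqrt{2\ln(1.25/\delta)}\,\Delta/\epsilon$ for the correct sensitivity $\Delta$.

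First I would fix neighbouring relations $R_1,R_2$ that agree on every row except one, where $R_1$ holds a tuple $t$ with feature vector $u\in\reals^m$, $\|u\|_2\le B$, and $R_2$ holds $t'$ with $u'$ analogously bounded. Because $\gamma$ is an additive aggregation of per-tuple monomial contributions, $\gamma(R_1)-\gamma(R_2)=v(t)-v(t')$, where $v(t)$ denotes the $k$-order monomial slice of $t$. A direct expansion of the dictionary representation shows that this slice, once the $m^k$ ordered indices are aligned, equals the tensor power $u^{\otimes k}$, so $\|v(t)\|_2=\|u\|_2^k\le B^k$. The triangle inequality then supplies the baseline bound $\Delta\le 2B^k$ for general $k$.

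I would then sharpen this in the two regimes matched by the algorithm. For even-order $k$, I would invoke the identity $\langle u^{\otimes k}, (u')^{\otimes k}\rangle=\langle u,u'\rangle^k$, whose right-hand side is non-negative when $k$ is even. Expanding $\|v(t)-v(t')\|_2^2=\|v(t)\|_2^2+\|v(t')\|_2^2-2\langle v(t),v(t')\rangle$ gives $\|v(t)-v(t')\|_2\le\sqrt{2}\,B^k$, the claimed $\sqrt{2}$ improvement. For the single-feature case ($m=1$), the slice collapses to the scalar $f^k$ on $[-B,B]$, and an extremal analysis of $|f^k-(f')^k|$ using the parity of $k$ yields a further $\sqrt{2}$ saving. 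These are precisely the $\Delta$ values plugged into the Gaussian noise calibration inside the algorithm.

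Finally, the Gaussian Mechanism theorem with these sensitivities establishes that the noisy release of $\gamma(R)$ is $(\epsilon,\delta)$-DP, and \Cref{theo:post} (robustness to post-processing) propagates the guarantee through every downstream use on the untrusted search platform. I expect the sensitivity analysis to be the main obstacle: in particular, the bookkeeping that identifies the dictionary representation with the tensor power $u^{\otimes k}$ so that repeated indices (e.g.\ $f_1f_2=f_2f_1$) neither inflate nor understate the norm, and the parity-based inner-product argument that unlocks the even-order refinement. Once those steps are in place, the rest is a direct application of the Gaussian Mechanism and post-processing results already stated in \Cref{sec:dpback}.
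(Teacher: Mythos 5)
Your union-case analysis is sound and is essentially the paper's own argument in different clothing: the identity $\langle u^{\otimes k},(u')^{\otimes k}\rangle=\langle u,u'\rangle^{k}$ is exactly the multinomial-theorem step the paper uses to amplify the dictionary representation up to the tensor-power norm (the multinomial coefficients are all $\ge 1$, so the dictionary $\ell_2$ distance is dominated by the tensor-power distance, and the even-order non-negativity of $\langle u,u'\rangle^k$ delivers the $\sqrt{2}$ saving). Your single-feature refinement via the range $a^{i}\in[0,B^{i}]$ for even $i$ also matches the paper, and summing the per-order squared bounds over $i=1,\dots,k$ recovers the $\Delta$ in the union branch of \Cref{alg:fdp}.

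However, there is a genuine gap: you never treat the join branch. When $A\neq null$, the released object is not a single aggregate but a histogram $\gamma_A(R)$ of semi-ring values, one per join-key value, and a neighbouring relation can change the modified tuple's \emph{join key}, not just its features. In that event the difference $\gamma_A(R_1)-\gamma_A(R_2)$ is not $v(t)-v(t')$ within one bin; it is $+v(t)$ in one bin and $-v(t')$ in another, so the squared sensitivity is $\|v(t)\|_2^{2}+\|v(t')\|_2^{2}\le 2\sum_{i=0}^{k}B^{2i}$, and crucially the $0$-order monomial (the count) now changes, which is why the join branch of \Cref{alg:fdp} noises orders $0$ through $k$ and takes the maximum of the union-style bound and $\sqrt{2\sum_{i=0}^{k}B^{2i}}$. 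Your proof as written only justifies the union sensitivity, so the theorem remains unestablished for every join-supporting dataset --- which is the central use case for augmentation search. You need to add the histogram case analysis (join key unchanged versus changed) to close this.
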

\begin{hproof} 
\fdp applies the Gaussian mechanism~\cite{dwork2006our} to the aggregated monomials for $(\epsilon,\delta)-DP$. Therefore, we only need to show the correctness of $\Delta$. We present simple cases illustrating proof concepts for the union and join of 1 feature (with lower $\Delta$), and the union of 2 features with $1/2$-order monomial semiring. These cases are meant to illustrate the key intuitions; full proofs and generalizations are available in \Cref{sensitivity} due to space limits. 

\noindent $\bullet$ {\it (1 feature, Union, any order)} For union, count ($0$-order monomial) remains unchanged as we consider bounded \texttt{DP}, where the neighbour relation has one tuple modified (instead of removed/added).
Let the modified feature value be $a \rightarrow a'$ where both $a$ and $a'$ have a domain of $[-B,B]$.  Then, for the $i$-th monomial, the squared difference is $(a^i - a'^i)^2$. When $i$ is odd, $a^i \in [-B^i,B^i]$, and $(a^i - a'^i)^2\leq (2B^i)^2$. When $i$ is even, $a^i \in [0,B^i]$, and $(a^i - a'^i)^2\leq (B^i)^2$.

\noindent$\bullet${\it (1 feature, Join, any order)} For join, the query also groups results by join key $A$. This can be considered as a histogram~\cite{xu2013differentially}, where each bin is a join key, and the value is the $k$-order monomial semi-ring. The neighbouring relation has two cases: the modified tuple has changed the join key or not. If the join key doesn't change, this is the same as the union case. If the join key changes, there are two bins with a maximum square difference of $\sum_{i=0}^{k} B^{2i}$ (note that, unlike the union, the counts change). Thus, the sensitivity is bounded by $\sqrt{2\sum_{i=0}^{k} B^{2i}}$. Finally, we take the maximum.

\noindent$\bullet${\it (2 features, Union, $1$-order)} Let the modified feature value be $(a,b) \rightarrow (a',b')$ where both $a^2 + b^2$ and $a'^2+b'^2$ are $\leq B^2$. Then, consider the $1$-order monomials $(a,b),(a',b')$. The squared difference is:
\begin{align*}
(a - a')^2 + (b-b')^2 
\leq & (2a^2 + 2a'^2) + (2b^2 + 2b'^2) = 4B^2 
\end{align*}

The sensitivities for higher odd orders are similar.

\noindent$\bullet$ {\it (2 features, Union, $2$-order)} For even-orders, we can obtain a tighter bound. Consider the $2$-order monomials $(a^2,ab,b^2),(a'^2,a'b',b'^2)$. The squared difference is:
\begin{align*}
     & (a^2 - a'^2)^2 +(ab-a'b')^2 +(b^2-b'^2)^2 \\
\leq & (a^2 - a'^2)^2 +2(ab-a'b')^2 +(b^2-b'^2)^2 \\
= & (\blue{a^4}{-}\red{2a^2a'^2}{+}a'^4){+}(\blue{2a^2b^2}{-}\red{4aba'b'}{+}2a'^2b'^2){+}(\blue{b^4}{-}\red{2b^2b'^2}{+}b'^4)\\
= & \blue{(a^2 +b^2)^2} + (a'^2 +b'^2)^2 - \red{2(aa'+bb')^2} \leq \blue{B^4} + B^4 - \red{0} = 2B^4
\end{align*}

For higher even orders, we can similarly amplify the monomials by the binomial coefficients (second line) to find a non-negative red term for even-order monomials, resulting in a tighter bound. Extending to joins follows a similar approach as the single feature case, where we consider group-by queries as histograms.
\end{hproof}

\begin{algorithm}

\caption{\fdp mechanism}
\label{alg:fdp}
\SetKwInOut{Input}{inputs}
\SetKwInOut{Output}{output}

\Input{Relation $R$, Join Key $A$, Order $k$, \texttt{DP} budget $(\epsilon,\delta)$}
\Output{Privatized Aggregated Relation $\tilde{R}$}
\eIf{$A = null$ (Union Only) }{ 
  $\Delta = \sqrt{\sum_{i=1}^{k} (\text{if i odd: 4, elif \#fea=1: 1, else: 2}) \cdot B^{2i}}$\;
  $\sigma, \tilde{R} = \sqrt{2\ln(1.25/\delta)}\Delta/\epsilon, \gamma(R)$\;
  // add i.i.d. noises to each $1-k$ order monomial $s$\;
  $\tilde{R} = \{s: \tilde{R}[s] + e {\sim}\mathcal{N}(0, \sigma^2)$ for $1-k$ monomial $s$\}\;
}    
{
  {\footnotesize $\Delta{=} max(\sqrt{\sum_{i=1}^{k} (\text{if i odd: 4, elif \#fea=1: 1, else: 2}) \cdot B^{2i}}, \sqrt{2\sum_{i=0}^{k} B^{2i}})$\;}
  $\sigma, \tilde{R} = \sqrt{2\ln(1.25/\delta)}\Delta/\epsilon, \gamma_A(R)$\;
  \ForEach{$a \in dom(A)$}{
  // add i.i.d. noises to each $0-k$ order monomial $s$\;
  $\tilde{R}(a) {=} \{s{:}\tilde R(a)[s]  {+} e {\sim}\mathcal{N}(0, \sigma^2)$ for $0-k$ monomial $s$\}\;
  }
}
\Return $\tilde{R}$\;
\end{algorithm}
\vspace{-5mm}

\subsection{Comparison with Other Mechanisms}
\label{sec:utility}

We next analyze the error of \fdp in estimating the statistics $s$ (expected values of monomials). Generally, the expected errors of $s$ are correlated with the error of
the target model parameter $\beta$ and accuracy; we will study the confidence bound for linear regression parameter in the next section, where the $s$ error is the key factor.

\stitle{Setting.} We consider a data corpus with size $n_{corp}$ (defined as the number of provider datasets) and has received $n_{req}$ requests. To simplify the analysis, we assume that: (1) the search only uses union operations (and we will discuss the extension to join). (2) each dataset has one feature, $n$ tuples, and a \texttt{DP} budget of $(\epsilon,\delta)$. The search platform evaluates all possible augmentations, each corresponding to a unique combination of provider datasets.

\stitle{Metrics.} 
The goal is to evaluate, for each augmentation, the expected $\ell_2$ error of the privatized set of monomials $\tilde{s}$:  $E[\|s - \tilde{s}\|_2]$.

\stitle{Mechanisms.} \revise{We compare \fdp  with standard \texttt{DP} mechanisms used in various existing trust models:}

\begin{myitemize}
\itemsep0em 

  \item For \sys's trust model (\Cref{sec:problem}), \fdp (\Cref{alg:fdp}) privatizes local aggregates independently for each dataset, and combines the aggregates with factorized ML.

  \item \revise{For the local model, the Per-tuple Privacy Mechanism (\texttt{TPM})\footnote{An alternative is to apply Gaussian mechanism to raw tuples and then compute monomial semi-ring; this, however, results in an even larger error.} applies \Cref{alg:fdp} to privatize each tuple~\cite{yang2020local}.}

  \item \revise{For the global model, the Aggregate Privacy Mechanism (\texttt{APM})\footnote{There are other alternatives like perturbing objectives and gradients; however they are similarly limited by the combinatorially large number of models to train.} first computes the union result $R_{\cup}$ after augmentation, and then applies \Cref{alg:fdp} to  $\gamma(R_{\cup})$~\cite{wang2018revisiting}. To ensure $(\epsilon,\delta)-DP$ for all $n_{req}(2^{n_{corp}-1}-1)$ augmentations, the \texttt{DP} budget has to be split.}

  \item \revise{For the shuffle model, shuffling~\cite{erlingsson2019amplification} privatizes each tuple, similar to \texttt{TPM}, but applies Laplace mechanism with the amplified privacy budget. These tuples are shuffled either at the $1^{st}$- (\texttt{SF-1}) or $2^{nd}$-level (\texttt{SF-2}); akin to \texttt{APM}, \texttt{SF-2} requires budget splits.}
\end{myitemize}

\begin{prop}
\label{theor:utility}
For the estimation of each augmentation (assuming that the number of augmented datasets and the order of $s$ are small constants), \fdp/\texttt{SF-1} has expected $\ell_2$ error of $\Tilde{O}(\Delta/n\epsilon)$, while \texttt{TPM} has an error of $\Tilde{O}(\Delta/\sqrt{n}\epsilon)$ and \texttt{APM}/\texttt{SF-2} has an error of $\Tilde{O}(n_{req}2^{n_{corp}}\Delta/n\epsilon)$, where $\Tilde{O}(\cdot)$ hides at most a logarithmic term.
\end{prop}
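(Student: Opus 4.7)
The plan is to bound $E[\|s-\tilde s\|_2]$ in each mechanism by first controlling the per-coordinate variance of the noise added to the vector of privatized monomial sums, then converting to an ``expected monomial'' error by dividing by the total count $kn$ of the augmented relation (which is public under bounded \texttt{DP}, so this division is exact). Because the number of augmented datasets $k$ and the monomial order are assumed to be $O(1)$, the monomial vector has $O(1)$ coordinates, so $E[\|N(0,\sigma^2 I_d)\|_2]=\Theta(\sqrt d\,\sigma)=\Theta(\sigma)$; hence throughout I only need to track the largest per-coordinate standard deviation, and analogously for sums of i.i.d.\ Laplace noise in the shuffle cases.

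For \fdp, Theorem~\ref{theory:fdp} and the Gaussian mechanism give each provider's locally aggregated monomial vector i.i.d.\ noise of standard deviation $\sigma=\sqrt{2\ln(1.25/\delta)}\,\Delta/\epsilon$ per coordinate. The platform estimates the augmentation's sum monomials by summing $k$ independently privatized aggregates (union semi-ring), so the combined noise has std $\sqrt k\,\sigma=\tilde O(\Delta/\epsilon)$; dividing by the count $kn$ yields $\tilde O(\Delta/(n\epsilon))$. For \texttt{TPM}, the same mechanism is applied with the input relation having cardinality one, so each of the $kn$ augmented tuples contributes i.i.d.\ noise of std $\tilde O(\Delta/\epsilon)$ per monomial; summing gives $\tilde O(\sqrt{kn}\,\Delta/\epsilon)$ and normalizing by $kn$ gives $\tilde O(\Delta/(\sqrt n\,\epsilon))$.

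For \texttt{APM}, the central platform runs one aggregate query per augmentation. Each provider dataset participates in $n_{req}\cdot 2^{n_{corp}-1}$ distinct augmentations, so sequential composition forces its per-query budget down to $\epsilon'=\Theta(\epsilon/(n_{req}\,2^{n_{corp}}))$. Re-running the Gaussian calculation with $\epsilon'$ in place of $\epsilon$ (the $\ell_2$-sensitivity of the union's aggregate is the same $\Delta$ under bounded \texttt{DP}) gives the stated $\tilde O(n_{req}\,2^{n_{corp}}\,\Delta/(n\epsilon))$ bound. For the shuffle variants, I invoke the privacy amplification bound of \cite{erlingsson2019amplification,feldman2022hiding}: shuffling $N$ tuples each satisfying $\epsilon_0$-\ldp yields $(\tilde O(\epsilon_0\sqrt{\log(1/\delta)/N}),\delta)$-\texttt{DP}, so a target budget $\epsilon$ permits per-tuple Laplace scale $\Delta/\epsilon_0$ with $\epsilon_0=\tilde O(\epsilon\sqrt N)$. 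For \texttt{SF-1}, $N=n$ and summing $n$ tuples gives a sum of Laplace noise with std $\tilde O(\sqrt n\cdot\Delta/\epsilon_0)=\tilde O(\Delta/\epsilon)$, matching \fdp after dividing by $kn$. For \texttt{SF-2}, shuffling happens at the platform but the same global budget must still be split across $\Theta(n_{req}\,2^{n_{corp}})$ augmentations, reproducing the \texttt{APM} blow-up.

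The main obstacle is plugging in shuffle amplification cleanly: the amplification factor depends on the size of the shuffled pool, which is $n$ for \texttt{SF-1} but $kn$ for \texttt{SF-2}, and I must verify that the extra $\sqrt k$ is absorbed into the constant hidden by the constant-$k$ assumption rather than interacting with the $2^{n_{corp}}$ factor. A secondary care point is bookkeeping the polylogarithmic slack: the $\sqrt{\ln(1/\delta)}$ from the Gaussian mechanism and the $\log(1/\delta)$ and $\log N$ terms from shuffle amplification are the only non-polynomial factors, and I will collect them into the $\tilde O$ notation at the end so that the four stated bounds line up cleanly.
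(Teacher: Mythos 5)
Your proposal is correct and follows essentially the same route as the paper's proof in the appendix: reduce the expected $\ell_2$ error to the per-coordinate noise standard deviation (using the constant-order assumption), normalize by the public count, charge \texttt{APM}/\texttt{SF-2} the $\Theta(\epsilon/(n_{req}2^{n_{corp}}))$ budget split from sequential composition over all augmentations, and invoke the $\epsilon_0\approx\epsilon\sqrt{n}$ shuffle-amplification bound of Erlingsson et al.\ for \texttt{SF-1}. The only cosmetic difference is that you track the union of $k$ datasets explicitly while the paper analyzes a single relation of size $n$ and absorbs $k$ into constants, which changes nothing.
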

The proof is in \Cref{utility}.

\stitle{Remark.} \Cref{theor:utility} highlights prior mechanisms' limitations: \revise{\texttt{APM}/\texttt{SF-2} are competitive only for small corpora and quickly exhaust budget for larger requests/corpus sizes due to budget split for all possible augmentations, and require trust in centralized aggregators/shufflers.}  \texttt{TPM} adds excessive noise to each tuple, requiring quadratically more tuples to achieve the same level of error as \fdp. Although \texttt{SF-1} can theoretically match \fdp's complexity with privacy amplification,  it's significant only for large numbers of tuples. For instance, given $\epsilon{=}1$ and $\delta{=}10^{-6}$, $\epsilon$ is amplified when $n$ reaches ${\sim}650$~\cite{erlingsson2019amplification,feldman2022hiding}. However, small $n$ needs amplification most, where \texttt{SF-1} provides much larger errors than \fdp (\Cref{sec:abalationexp}).

Extending the analysis to joins involves considering group-by errors based on domain size and multiplication of privatized monomials. Comparisons remain similar: \texttt{TPM} needs a quadratically larger data size, while \texttt{APM} may outperform \fdp only for small corpora and requests but exhausts budget for larger corpus sizes.

\vspace{-2mm}
\subsection{Differentially Private Data Search Platform}
\label{sec:system}

In this section, we discuss \sys, a data search platform that integrates \fdp to ensure differential privacy.

\stitle{Provider.}  The architecture of the \sys, which uses \fdp for \texttt{DP}, is illustrated in \Cref{fig:arch}. For each dataset $R$ data provider owns, the supported operation ($\Join/\cup$\footnote{Any dataset supports join also supports union by aggregating out the join key.} or $\cup$-only) is decided.
If join is supported, the join key $A$ must also be specified. 
\fdp is then applied locally to $R$ to privatize the sufficient statistics $\widetilde {\gamma(R)}$, which are then uploaded to \sys. As \sys is not trusted, {\it data storage} only stores privatized statistics, but not raw data. All operations over $\widetilde{\gamma(R)}$ are post-processing without additional \texttt{DP} costs. 

\stitle{Requester.} The requester has model type $M$ and $R_{train}$, and wants to improve accuracy on $R_{test}$. The requester computes and submits to \sys the privatized sufficient statistics $\widetilde{\gamma(R_{train})}$ and $\widetilde{\gamma(R_{test})}$.  {\it Data discovery} returns a set of joinable or unionable relations $R$ from {\it data storage}.  Then, {\it Data search}  applies greedy algorithm (following Kitana~\cite{kitana}): in each iteration, it evaluates each candidate and adds the one that most improves the model accuracy.  \sys is agnostic to the search algorithm, and others~\cite{vafaie1994feature,chepurko2020arda} can also be used.

\stitle{Data Discovery.} Previous data discovery systems~\cite{castelo2021auctus,fernandez2018aurum} leverage MinHash sketches, column type and data distribution statistics; \sys supports all of them. Specifically, for categorical attributes, we utilize minhash sketches, computed from public domains, to measure set similarity. For numerical attributes, we rely on public schemas for column names and types. Additionally, we construct (approximated) data distribution statistics such as count, mean, standard deviation, and correlation from the privatized $2$-order monomial semi-rings, without additional \texttt{DP} costs.

\stitle{Preprocessing.} Before applying \fdp, requesters and providers can locally preprocess datasets to enhance utility and robustness.  \revise{For instance, datasets may have categorical features not directly supported by the proxy model (linear regression).  For low cardinality categories, standard one-hot encoding can be applied, treating the encoded features as numerical for privatization by \fdp. However, high cardinality categorical features yield high-dimensional vectors when one-hot encoded, which is problematic and typically requires specialized techniques~\cite{moeyersoms2015including,cerda2020encoding}.
This is precisely the problem \sys can address through augmentation. By joining with augmentations, high cardinality categories in $R_{train}$, like location, can be encoded into meaningful lower dimensional numerical features, like population and economic indicator, from augmented relations. Hence, we suggest using high cardinality categorical features as join keys.}

\sys also applies two steps to boost \texttt{DP} robustness.  First, it removes outliers (${>}1.5$ std from the mean), which typically improves model performance and reduces the tuple $\ell_2$ norms, enhancing \texttt{DP} noise robustness~\cite{lee2011much}.   
Second, {\it all \texttt{DP} mechanisms} (including ours) degrade with increasing dimensionality due to the increased tuple $\ell_2$ norms.
Thus, \sys applies dimensionality reduction~\cite{mackiewicz1993principal} to retain the top $K$ principal components ($K{=}1$ works best in our experiments), and rescales tuples to bound max $\ell_2$ norm  ${\leq}B$. This lowers the noise scale, improves utility, and achieves a lower sensitivity $\Delta$ with \#fea$=1$ (\Cref{alg:fdp}).
These steps are applied to all datasets and \texttt{DP} baselines in our real-world experiments (\Cref{exp:nyc}).

\stitle{Supporting Varied Privacy Needs.}
A unique benefit of \sys's design is that it can  adapt to different privacy needs.   In cases where pure \texttt{DP} ($\delta{=}0$) is required, \fdp can be modified to apply Laplace mechanisms~\cite{dwork2006calibrating}.
In situations where individuals don't trust providers or requesters, \fdp can be reduced to  \texttt{LPM} to privatize individual tuples. Conversely, shuffling only guarantees approximate \texttt{DP}  and \texttt{GPM} always requires a trusted centralized aggregator.

\stitle{\revise{ML training after data search.}} 
\revise{
After \sys finds predictive augmentations using a differentially private proxy model (linear regression), the model could be directly returned to requesters. However, requesters may need more complex model $M$, and the training shall also satisfy \texttt{DP}.
To achieve this, \sys can be integrated within a larger differentially private federated ML system ~\cite{wei2020federated,truex2020ldp,zhao2020local,wang2020hybrid}, where \sys first locates augmentations, and then the ML systems use the augmented dataset to train sophisticated models, such as deep neural networks, via differentially private gradient descent.
}

\stitle{Scope.} While \sys can employ \fdp to support a wide range of models~\cite{schleich2019layered} and approximate GLM~\cite{kulkarni2021differentially}, this paper focuses on linear regression~\cite{schleich2016learning} because it's widely used and is adopted by previous data search~\cite{kitana,chen2017semi,santos2022sketch}. Next, we dive deep into linear regression to analyze the task utility and propose further optimizations.

\vspace{-2mm}
\section{Diving Deep Into Linear Regression}

This section examines the ML task {\it utility} \fdp provides and suggests optimizations for linear regression. 
We start with the assumption of linear regression on many-to-many join (as opposed to one-to-one~\cite{wang2020hybrid,hardy2017private}), which is challenging due to unexpected duplication and independence. We then propose an unbiased estimator. Next, we explore the confidence bounds for the linear regression parameters and propose optimizations to tighten the bound further.

\vspace{-2mm}
\subsection{Linear Regression on Many-to-Many Join}
\label{sec:joinassumption}

Linear regression assumes a noisy linear relationship between the features and target variable: $\textbf{y} = \textbf{X}\beta + \textbf{e}$, where $\textbf{e}$ is the error term. This is consistent with our assumption so far if $R_\Join = R_1 \Join \dots R_k$ is the population,  and let us use the monomial semi-ring to compute the expected $s$. 
However, when many-to-many joins are involved, $R_\Join$ often doesn't represent the population as joins generate Cartesian products for each matching key.   
This leads to (1) duplicated tuples (the same $\textbf{y}$ values are repeated) and (2) unexpected {\it independence} between features from different relations with the same join key, leading to biased estimation.

To the best of our knowledge, linear regression over many-to-many joins has been understudied. The closest work is multi-view learning~\cite{frank2007method,guo2008multirelational}, which pre-aggregates (e.g., averaging) features. However, this introduces errors for long join paths due to Simpson paradox~\cite{pearl2022comment} (e.g., average of average is not average). In contrast, we propose an unbiased estimator based on the assumptions from vertical federated ML that each party holds a projection; this complements prior factorized ML work~\cite{abo2016faq,olteanu2015size,weighing}, which studied the computational complexity of many-to-many joins.

Our analysis focuses on an easy-to-explain case inspired by vertical federated ML~\cite{wang2020hybrid,hardy2017private}, where we want to train linear regression over relation $R$. However, $R$ is not directly observable, and each party can only access a projection $\pi(R)$. Multiple $\pi(R)$ may have many-to-many relationships on the common attribute (join key) instead of the one-to-one relationships studied by federated ML. The objective is to train linear regression on $R$ collectively.

\stitle{Unbiased Estimator.} Given $R$ of cardinality $n$, suppose there are two parties holding different projections $\pi_{F_1}(R)$ and $\pi_{F_2}(R)$, and the goal is to compute the 2-order monomial semi-ring $\gamma(\pi_{F_1\cup F_2}(R))$. However, factorized ML is trained on $R_\Join = \pi_{F_1,J}(R) \Join_J \pi_{F_2,J}(R)$ with join key $J = F_1\cap F_2$; $R_\Join$ is likely to differ from $\pi_{F_1\cup F_2}(R)$ (unless $J$ is primary key), resulting in bias. To address this, we propose an unbiased estimator for $s$ based on $s' = \gamma(R_\Join)$.
\begin{prop}[Unbiased Estimator of $s$ over $R$]
We make the simplifying assumption that $J$ is uniformly distributed (if $d=|dom(J)|$, each $j \in J$ appears $n/d$ times in $R$) and is not correlated with any other attribute.  Let $s' = \gamma(R_\Join)$. Then,
\begin{equation*}
\hat{s} = \begin{cases}
\mathrlap{f_1f_2 = \frac{1-n}{1-d} \frac{s'[f_1f_2]}{s'[c]} + \frac{n-d}{1-d} \frac{s'[f_1]}{s'[c]} \frac{s'[f_2]}{s'[c]}}\\
&\text{for $f_1 \in F_1 - J, f_2 \in F_2 - J$}\\
p = s'[p]/s'[c] &\text{for any other monomial $p$}\\
\end{cases}
\end{equation*}
$\hat{s}$ is an unbiased estimator of  monomial semi-ring $s= \gamma(\pi_{F_1\cup F_2}(R))$.
\end{prop}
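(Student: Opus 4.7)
The plan is to compute the expectations of the two summands appearing in $\hat{s}[f_1 f_2]$ under the stated assumptions, and then verify algebraically that the coefficients $\tfrac{1-n}{1-d}$ and $\tfrac{n-d}{1-d}$ are chosen precisely to eliminate the spurious bias term. Throughout, I treat $R$ as an i.i.d.\ sample of size $n$ from the population distribution, and interpret ``unbiased estimator of $s = \gamma(\pi_{F_1\cup F_2}(R))$'' as unbiased estimator of the population expectations $E[p]$ that $s[p]/s[c]$ would target when $R$ is the full population.

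First I would use the uniform-distribution assumption on $J$ to fix group sizes: each $j \in \mathrm{dom}(J)$ appears $n/d$ times in both $\pi_{F_1,J}(R)$ and $\pi_{F_2,J}(R)$, so the many-to-many join within each group produces $(n/d)^2$ tuples and $s'[c] = d \cdot (n/d)^2 = n^2/d$. Writing $S_k(j) = \sum_i f_k(t_{j,i})$, the sufficient statistics decompose as $s'[f_k] = (n/d)\sum_{t \in R} f_k(t)$ and $s'[f_1 f_2] = \sum_j S_1(j)\, S_2(j)$.

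Next I would evaluate $E[s'[f_1 f_2]/s'[c]]$ by expanding $S_1(j) S_2(j) = \sum_{i_1, i_2} f_1(t_{j,i_1}) f_2(t_{j,i_2})$. Invoking the assumption that $J$ is uncorrelated with all other attributes---so that conditional on $J(t) = j$ the joint distribution of the remaining attributes coincides with their unconditional joint distribution---the $n/d$ ``diagonal'' contributions ($i_1 = i_2$) have expectation $E[f_1 f_2]$, while the $(n/d)(n/d-1)$ ``off-diagonal'' contributions have expectation $E[f_1]\, E[f_2]$. Aggregating over the $d$ groups and dividing by $s'[c]$ yields $\tfrac{d}{n} E[f_1 f_2] + \tfrac{n-d}{n} E[f_1] E[f_2]$. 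An analogous i.i.d.\ calculation on the sample means $\bar f_k = s'[f_k]/s'[c]$ gives $E[\bar f_1 \bar f_2] = \tfrac{1}{n} E[f_1 f_2] + \tfrac{n-1}{n} E[f_1] E[f_2]$.

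Plugging these into $\hat s[f_1 f_2]$, the coefficient on $E[f_1 f_2]$ collapses to $\tfrac{(1-n)d + (n-d)}{n(1-d)} = 1$ and the coefficient on $E[f_1] E[f_2]$ collapses to $\tfrac{(n-d)\,[(1-n)+(n-1)]}{n(1-d)} = 0$, giving $E[\hat s[f_1 f_2]] = E[f_1 f_2]$. For any monomial $p$ supported entirely on one side (e.g.\ $p = f$ or $p = fg$ with $f,g \in F_1$), each tuple of $\pi_{F_1}(R)$ is replicated exactly $n/d$ times in $R_\Join$, so $s'[p]/s'[c]$ collapses to the ordinary sample mean of $p$ over $R$ and is trivially unbiased. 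The main obstacle is the careful bookkeeping of diagonal vs.\ off-diagonal contributions and, more importantly, formalizing the uncorrelatedness assumption into a concrete conditional-distribution statement that justifies factoring $E[f_1(t_{j,i_1}) f_2(t_{j,i_2})] = E[f_1] E[f_2]$ whenever $i_1 \neq i_2$---this is exactly what makes the linear combination of two biased quantities telescope to the desired target.
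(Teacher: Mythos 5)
Your proposal is correct and follows essentially the same route as the paper's proof: both compute the expectations of $s'[f_1f_2]/s'[c]$ and of the product of the two overall sample means, then verify that the coefficients $\frac{1-n}{1-d}$ and $\frac{n-d}{1-d}$ make the bias terms cancel, and both handle single-relation monomials by noting the uniform $n/d$-fold replication. The only cosmetic difference is that you obtain the intermediate expectations by a direct diagonal/off-diagonal expansion of the double sum, whereas the paper packages the same computation through covariance identities ($E[XY]=\mathrm{cov}(X,Y)+E[X]E[Y]$ together with bilinearity and tuple independence); the two are algebraically identical.
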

\vspace{-2mm}
The proof is in \Cref{sec:unbiasedproof}. We assume vertical partitions of $R$, but real-world datasets may also be horizontally partitioned; the estimators could be refined for these cases. Our analysis studies the base case, and the unbiased estimator can be recursively applied for multiple joins and unions. Note that the estimators are post-processing steps without compromising \texttt{DP}.

\subsection{Simple Linear Regression Analysis}
\label{sec:betaboundanalysis}
Building on the assumption in the previous section, this section studies the confidence bound of factorized linear regression. Compared to~\cite{wang2018revisiting}, our analysis focuses on simple linear regression with one feature, under less stringent assumptions; this scenario is sufficient to show  \fdp's advantages over other mechanisms, and motivates optimization. We first consider a single relation case, then extend to union and join. We'll begin with defining the confidence bound, which will be used to evaluate the utility of private estimators.

\begin{definition}[Confidence Bound]
Given parameter $\theta$, the $(1-p)$ confidence bound $C^{\tilde{\theta}}_{\hat{\theta}}(p)$ for an private estimator $\tilde{\theta}$ is:
\vspace{-2mm}
  $$C^{\tilde{\theta}}_{\hat{\theta}}(p) = \inf\,\,\{b: \pr[|\tilde{\theta} - \hat{\theta}|\leq b] \geq 1 - p\}$$
\vspace{-2mm}
where $\hat{\theta}$ is the non-private estimator.
\end{definition}

We consider relation $R[x,y]$ with one feature $x$, target variable $y$, and cardinality $n$. We want to train $y = \beta_x\cdot x + \beta_0$,  and focus on the parameter $\beta_x$; $\beta_x$ has an optimal non-privitized estimator  $\hat{\beta_x} = \frac{\widehat{E[xy]} - \widehat{E[x]}\widehat{E[y]}}{\widehat{E[x^2]} - \widehat{E[x]}^2} = \cov/\var$, where $\cov$ and $\var$ are polynomials that can be derived from aggregeted 2-order monomials $\gamma(R)$. 
We apply \fdp to compute the privatized 2-order $\gamma(R)$ and study the confidence bound of the privatized estimator $\tilde{\beta}_x$. 
Note that more familiar error definitions like mean-squared-error can be upper bounded, roughly, by the square of the confidence bound.

\begin{theorem}[Confidence Bound of $\tilde{\beta}_x$]\label{theo:lrcbound}
For every $p$ where $\tau_1< 1$ holds, the $(1-p)$ confidence bound for $\tilde{\beta_x}$ is:
\vspace{-2mm}
$$
C^{\tilde{\beta_x}}_{\hat{\beta_x}}(p) \leq \tau_2 + \frac{\tau_1}{1-\tau_1}\left(\hat\beta_x + \tau_2\right)
$$
where $\tilde{\beta}_x$ ($\hat\beta_x$) is the private (non-private) estimate of  $\beta_x$. Let $B_1$ and $B_2$ be the $(1-p)$ confidence bounds  for $\pvar$ and $\pcov$ respectively. Then $\tau_1 = B_1/\var$ and $\tau_2 = B_2/\var$ are both $O\left(\frac{B^4\ln(1/\delta)\ln(1/p)}{\epsilon^2 n\var}\right)$. The probability is taken over the randomness of \fdp.

\label{thm:pfactorized}
\end{theorem}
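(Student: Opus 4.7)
The plan is to reduce the confidence bound on $\tilde\beta_x$ to confidence bounds on its numerator and denominator, then bound those via Gaussian concentration applied to the noisy monomials produced by \fdp. Throughout, I would work conditionally on the joint event $\calE = \{|\pvar - \var|\le B_1\} \cap \{|\pcov - \cov|\le B_2\}$, which by union bound I will arrange to have probability at least $1-p$.

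First, I would derive the deviation of the ratio estimator in closed form. Setting $\Delta_1 = \pvar - \var$ and $\Delta_2 = \pcov - \cov$ and reducing to a common denominator,
\[
\tilde\beta_x - \hat\beta_x \;=\; \frac{\cov + \Delta_2}{\var + \Delta_1} - \frac{\cov}{\var} \;=\; \frac{\Delta_2}{\pvar} \;-\; \hat\beta_x \cdot \frac{\Delta_1}{\pvar}.
\]
Under $\calE$, the hypothesis $\tau_1 = B_1/\var < 1$ gives $\pvar \ge \var(1-\tau_1) > 0$, so the denominator is bounded away from zero. A triangle inequality then yields
\[
|\tilde\beta_x - \hat\beta_x| \;\le\; \frac{B_2}{\var(1-\tau_1)} \;+\; |\hat\beta_x|\cdot\frac{B_1}{\var(1-\tau_1)} \;=\; \frac{\tau_2 + \tau_1\,|\hat\beta_x|}{1-\tau_1},
\]
which, after an elementary rearrangement, matches the stated form $\tau_2 + \tfrac{\tau_1}{1-\tau_1}(\hat\beta_x + \tau_2)$. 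This chain of inequalities holds on $\calE$, so it is a valid $(1-p)$ confidence bound.

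Second, I would derive explicit values for $B_1$ and $B_2$. By \Cref{theory:fdp}, \fdp adds i.i.d.\ Gaussian noise of scale $\sigma = O(B^2\sqrt{\ln(1/\delta)}/\epsilon)$ to each monomial aggregate of order $0,1,2$ (for a single feature, $\Delta^2 = 4B^2 + B^4 = O(B^4)$). Since we are in the union-only, single-relation regime, the count is not privatized, so the perturbations $\eta_x,\eta_y,\eta_{x^2},\eta_{xy}$ in the corresponding moment estimates are independent Gaussians of scale $\sigma/n$. Expanding
\begin{align*}
\pvar - \var &\;=\; \eta_{x^2} - 2\,\widehat{E[x]}\,\eta_x - \eta_x^2, \\
\pcov - \cov &\;=\; \eta_{xy} - \widehat{E[x]}\,\eta_y - \widehat{E[y]}\,\eta_x - \eta_x\eta_y,
\end{align*}
I would bound each linear-in-noise term via the standard Gaussian tail inequality and each quadratic-in-noise term via a sub-exponential ($\chi^2$-type) tail, then union-bound over the constant number of terms. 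Using the preprocessing invariant $|\widehat{E[x]}|, |\widehat{E[y]}| \le B$, the linear terms contribute $O(B\sigma\sqrt{\ln(1/p)}/n)$ and the quadratic terms $O(\sigma^2\ln(1/p)/n^2)$; substituting $\sigma^2 = O(B^4\ln(1/\delta)/\epsilon^2)$ and collecting gives $B_1, B_2 = O\!\left(B^4\ln(1/\delta)\ln(1/p)/(\epsilon^2 n)\right)$, as claimed.

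The main obstacle will be the clean handling of the quadratic-in-noise cross terms such as $\eta_x^2$ and $\eta_x\eta_y$: a plain Gaussian tail does not capture them, and absorbing their contribution into a single uniform $O(B^4\ln(1/\delta)\ln(1/p)/(\epsilon^2 n))$ expression requires either a mild regime assumption (so that the linear-in-noise contributions dominate, absorbing the quadratic ones into a constant) or a slightly loose bound that trades a sharper rate for a clean statement. The ratio manipulation and union-bound in the remaining steps are elementary.
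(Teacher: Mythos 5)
Your proposal is correct and follows essentially the same two-stage route as the paper: a triangle-inequality perturbation bound for the ratio $\pcov/\pvar$ conditioned on the event that both moment deviations are within $B_1,B_2$, followed by Gaussian tail bounds on the noisy monomials to obtain $B_1,B_2 = O\left(B^4\ln(1/\delta)\ln(1/p)/(\epsilon^2 n)\right)$; your common-denominator identity even reproduces the paper's final expression exactly. The only point worth noting is that your anticipated obstacle with the quadratic noise terms ($\eta_x^2$, $\eta_x\eta_y$) is a non-issue: once $|\eta|\le t$ holds on the high-probability event one simply squares to get $\eta^2\le t^2$, which is exactly how the paper handles it, and the resulting $O(\sigma^2\ln(1/p)/n^2)$ contribution is absorbed into the stated rate in the paper's asymptotic regime ($B\to\infty$, $\epsilon\to 0$).
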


The proof and extension to multi-features can be found in \Cref{app:linear}. \Cref{theo:lrcbound} demonstrates that the private estimator $\tilde{\beta_x}$ is asymptotically close to the non-private $\hat{\beta_x}$. The key factors in reducing the discrepancy are $\tau_1,\tau_2$. \texttt{APM} and \texttt{SF-2} have combinatorially large $\tau_1,\tau_2$ due to the  budget splits. \texttt{TPM} requires quadratically more data than \fdp to achieve the same level of $\tau_1,\tau_2$.

\stitle{Extension to Factorized ML.}
The full procedures to extend the confidence bounds for factorized ML are in \Cref{app:linear}. For the union of $k$ datasets: $R = R_1 \cup R_2... \cup R_k$, $\tau_1$ and $\tau_2$ are reduced by a factor of $\sqrt{k}$, while the rest remain unchanged. For the join of two datasets $R[x,y,J] = R_1[x,J] \Join R_2[y,J]$, where $|dom(J)|=d$, there is additional noise to the count which could cause distortion if the privatized count is close to or less than $0$. To address this, an additional assumption that noises to count is $o(n/d)$ is needed~\cite{wang2018revisiting}, resulting $\tau_1$ and $\tau_2$ to increase by a factor of $O(\sqrt{d})$ and $O(n/\sqrt{d})$.

\vspace{-3mm}
\begin{algorithm}
\caption{\fdpopt algorithm for Join}
\label{alg:fdpopt}
\SetKwInOut{Input}{inputs}
\SetKwInOut{Output}{output}
\Input{Relation $R$, Join Key $A$, Order $k$, \texttt{DP} budget $(\epsilon,\delta)$}
\Output{Privatized Annotated Relation $\tilde R$}
\ForEach{$i \in \{0,\dots,k\}$}{
$\Delta, \epsilon', \delta' = (\text{if i odd: 2,  else:}\sqrt{2}) \cdot B^{i}, \epsilon/(k+1), \delta/(k+1)$\;
  $\sigma, \tilde R = \sqrt{2\ln(1.25/\delta')}\Delta/\epsilon', \gamma_A(R)$\;
  \ForEach{$a \in dom(A)$}{
  // add i.i.d. noises to each $i$ order monomial $s$\;
  $\tilde R(a) {=} \{s: \tilde R (a)[s] + e {\sim}\mathcal{N}(0, \sigma^2)$ for $i$ monomial $s$\}\;
  }
  }
\Return $\tilde R$\;
\end{algorithm}
\vspace{-6mm}

\subsection{Optimization: Better Noise Allocation}
\label{opt:noiseallocation}

In \Cref{sec:betaboundanalysis}, we analyzed the linear regression confidence bounds.  We propose to adjust noise allocation to improve the bounds further.

First, previous work (e.g.,~\cite{Alabi22}) has shown that $\beta_x$ is usually the parameter of interest instead of $\beta_0$ for linear regression over the union.  In this case, we suggest each provider adding noises directly to $\sigma_x^2, \sigma_{xy}^2$, rather than monomials $x^2, xy, x, y$. This reduces $\tau_1$ and $\tau_2$ by a factor of $O(B^2\sqrt{\ln(1/\delta)\ln(1/p)}/\epsilon)$ (\Cref{app:noiseopt}).

Second, optimizing joins is more difficult as we add noise locally to monomials to circumvent combinatorially large \texttt{DP} costs. However, we can reduce $\tau_1,\tau_2$ by $O(B^2)$ through smart budget allocation (\Cref{app:noiseopt}). Our insight is that
lower-order monomials are multiplied by more monomials than higher-order ones. 
For example, in \Cref{join_example}, $0$-order monomials are multiplied by $0,1,2,3$-order ones, while $3$-order monomials only multiply with $0$-order ones. Hence, we shall decrease the noise to lower-order ones. \fdpopt in \Cref{alg:fdpopt} achieves this by dividing the \texttt{DP} budget across orders; lower order monomials have lower sensitivity and thus fewer noises.

\vspace{-3mm}
\section{Evaluation}

We evaluate \fdp on  NYC Open Data~\cite{nycopen} corpus of 329  datasets for an end-to-end dataset search. We then use ablation studies via synthetic datasets to validate our theoretical analyses.

\begin{figure*}
  \centering
     \begin{subfigure}[t]{0.2\textwidth}
         \centering
         \includegraphics[width=\textwidth]{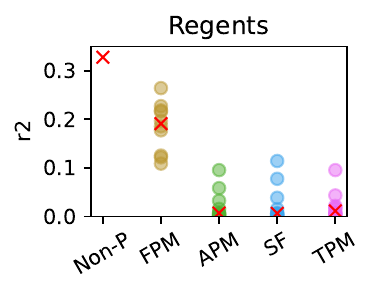}
\vspace*{-3mm}
     \end{subfigure}
     \begin{subfigure}[t]{0.18\textwidth}
         \centering
         \includegraphics[width=\textwidth]{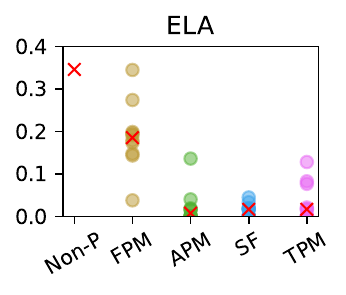}
\vspace*{-3mm}
     \end{subfigure}
     \begin{subfigure}[t]{0.18\textwidth}
         \centering
         \includegraphics[width=\textwidth]{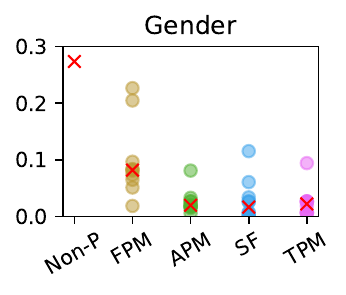}
\vspace*{-3mm}
     \end{subfigure}
     \begin{subfigure}[t]{0.18\textwidth}
         \centering
         \includegraphics[width=\textwidth]{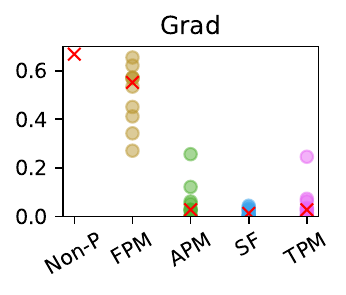}
\vspace*{-3mm}
     \end{subfigure}
     \begin{subfigure}[t]{0.18\textwidth}
         \centering
         \includegraphics[width=\textwidth]{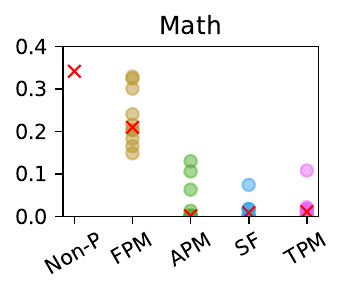}
\vspace*{-3mm}
     \end{subfigure}
     \vspace*{-4mm}
  \caption{Utility (non-privatized $r2$) of the returned combinations of datasets searched by different baselines over $10$ runs, with the median indicated by a red cross. \fdp exhibits significantly better utility than the other baselines.}
  \label{exp:nycutility}
\end{figure*}

\begin{figure*}
  \centering
     \begin{subfigure}[t]{0.263\textwidth}
         \centering
         \includegraphics[width=\textwidth]{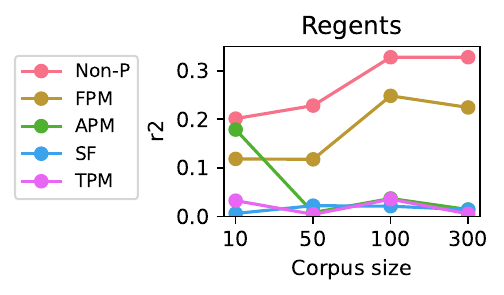}
\vspace*{-2mm}
     \end{subfigure}
     \begin{subfigure}[t]{0.18\textwidth}
         \centering
         \includegraphics[width=\textwidth]{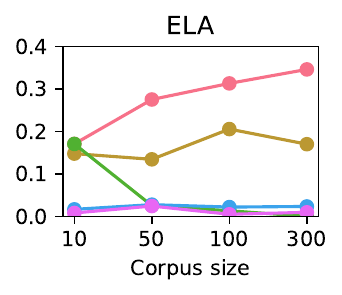}
\vspace*{-2mm}
     \end{subfigure}
     \begin{subfigure}[t]{0.18\textwidth}
         \centering
         \includegraphics[width=\textwidth]{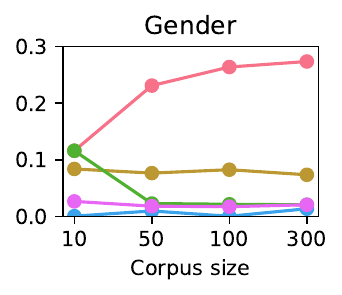}
\vspace*{-2mm}
     \end{subfigure}
     \begin{subfigure}[t]{0.18\textwidth}
         \centering
         \includegraphics[width=\textwidth]{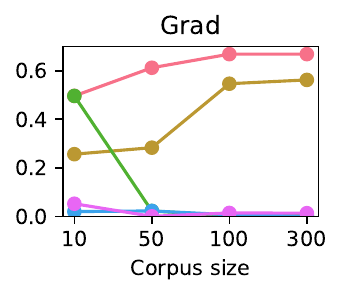}
\vspace*{-2mm}
     \end{subfigure}
     \begin{subfigure}[t]{0.18\textwidth}
         \centering
         \includegraphics[width=\textwidth]{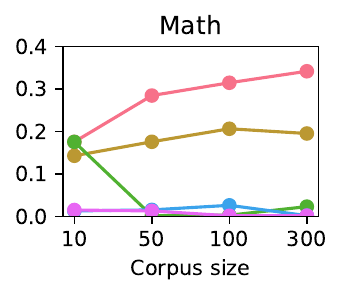}
         \vspace*{-2mm}
     \end{subfigure}
       \vspace*{-2mm}
  \caption{
  Utility (non-privatized $r2$) of the returned combinations of datasets searched by different baselines when varying the  corpus size $n_{corp}$. \fdp is scalable for large corpus, while \texttt{APM} only performs well for small $n_{corp}$.
   }

   \label{exp:nycrepo}
\end{figure*}

\begin{figure*}
  \centering
     \begin{subfigure}[t]{0.263\textwidth}
         \centering
         \includegraphics[width=\textwidth]{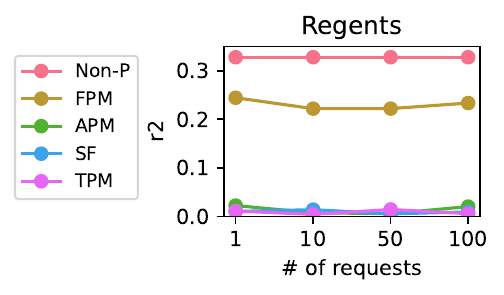}

     \end{subfigure}
     \begin{subfigure}[t]{0.18\textwidth}
         \centering
         \includegraphics[width=\textwidth]{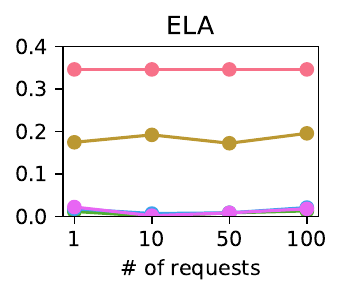}

     \end{subfigure}
     \begin{subfigure}[t]{0.18\textwidth}
         \centering
         \includegraphics[width=\textwidth]{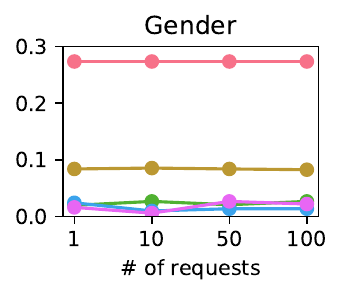}

     \end{subfigure}
     \begin{subfigure}[t]{0.18\textwidth}
         \centering
         \includegraphics[width=\textwidth]{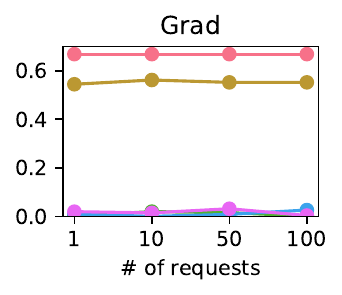}

     \end{subfigure}
     \begin{subfigure}[t]{0.18\textwidth}
         \centering
         \includegraphics[width=\textwidth]{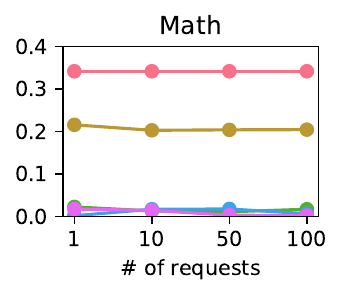}
     \end{subfigure}
  \caption{
  Utility (non-privatized $r2$) of the returned combinations of datasets searched by different baselines when varying the number of requests $n_{req}$.  \fdp consistently performs well, while the others either suffer from large noises (\texttt{TPM}, \texttt{SF}) or budget splits (\texttt{APM}).
  }
   \label{exp:nycrequest}
\end{figure*}
\begin{figure*}
  \centering
    \begin{subfigure}{0.23\textwidth}
         \centering
         \includegraphics[width=\textwidth]{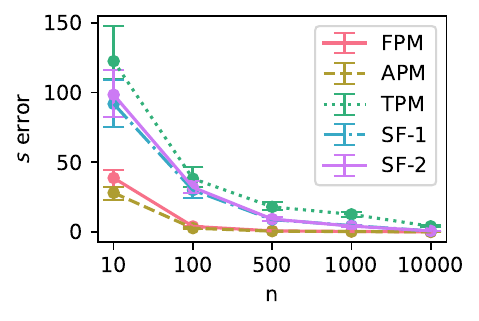}
         \vspace*{-5mm}
         \caption{$s$ error when varying $n$.}
         \label{fig:serror}
     \end{subfigure}
    \begin{subfigure}{0.23\textwidth}
         \centering
         \includegraphics[width=\textwidth]{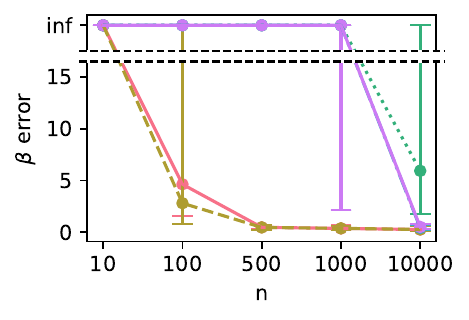}
          \vspace*{-5mm}
         \caption{$\beta$ error when varying $n$.}
         \label{fig:betaerror}
     \end{subfigure}
     \begin{subfigure}{0.23\textwidth}
         \centering
         \includegraphics[width=\textwidth]{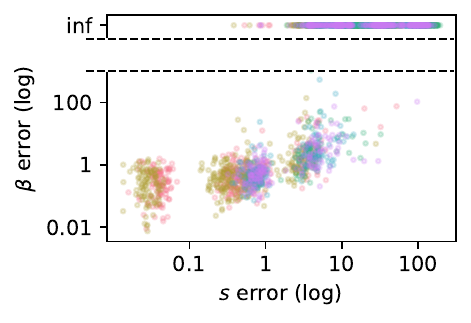}
          \vspace*{-5mm}
         \caption{Compare $\beta$ with $s$ error.}
         \label{fig:fs_relation}
     \end{subfigure}
    \begin{subfigure}{0.23\textwidth}
         \centering
         \includegraphics[width=\textwidth]{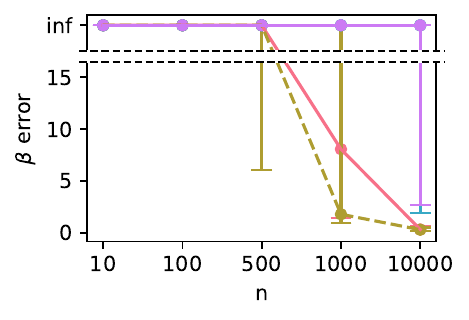}
          \vspace*{-5mm}
         \caption{$\beta$ error when $\epsilon{=}0.1$. }
         \label{fig:betaerror_small_eps}
     \end{subfigure}
     \begin{subfigure}{0.23\textwidth}
         \centering
         \includegraphics[width=\textwidth]{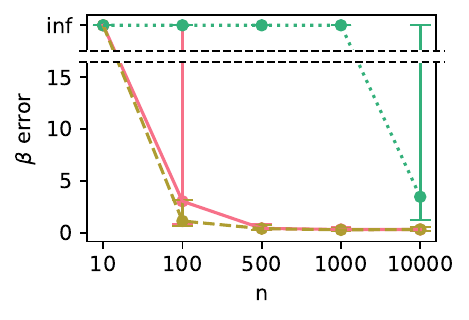}
          \vspace*{-5mm}
         \caption{$\beta$ error when $\delta{=}0$.}
         \label{fig:betaerror_pure}
     \end{subfigure}
    \begin{subfigure}{0.23\textwidth}
         \centering
         \includegraphics[width=\textwidth]{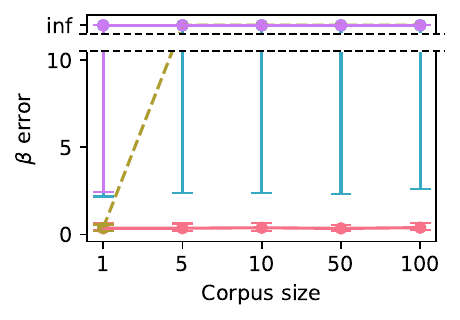}
          \vspace*{-5mm}
         \caption{$\beta$ error when varying $n_{corp}$.}
         \label{fig:betaerror_users}
     \end{subfigure}
     \begin{subfigure}{0.23\textwidth}
         \centering
         \includegraphics[width=\textwidth]{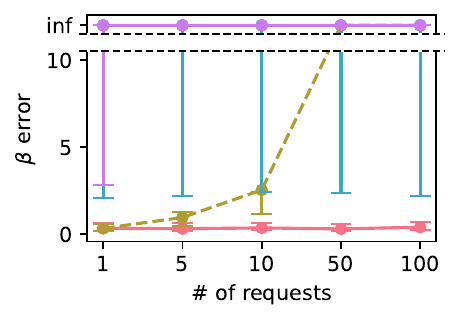}
          \vspace*{-5mm}
         \caption{$\beta$ error when varying $n_{req}$.}
         \label{fig:betaerror_requests}
     \end{subfigure}
     \begin{subfigure}{0.23\textwidth}
         \centering
         \includegraphics[width=\textwidth]{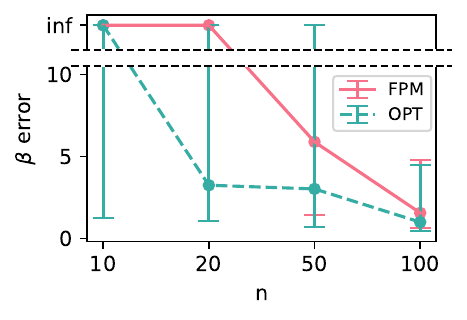}
          \vspace*{-5mm}
         \caption{$\beta$ error with optimized \fdp.}
         \label{fig:beta_error_unionopt}
     \end{subfigure}
   \vspace*{-3mm} 
  \caption{Ablation tests for unions: (a). (b). All baselines show reduced $s$ and $\beta$ errors as $n$ increases, with \fdp and \texttt{APM} exhibiting the least error; (c). Larger $s$ error results in higher $\beta$ errors and a greater risk of failure; (d). For $\epsilon=0.1$, the $\beta$ error plot shifts to the right; (e). For pure \texttt{DP} with $\delta=0$,  \fdp can adjust to it and offer comparable utility, while \texttt{SF} fails; (f). (g). \fdp is scalable for large corpora with large numbers of requests, while \texttt{APM} deteriorates significantly when $n_{corp} > 5$ or $n_{req} > 10$; (h). For simple linear regression when only $\beta_x$ is of interest,  \fdpopt reduces the $\beta$ errors and failure risk, particularly for small $n$.}
\end{figure*}

\vspace{-3mm}
\subsection{Real-world Experiments}
\label{exp:nyc}

\stitle{Data and Workload.} We construct a large data corpus of 329 NYC Open Data~\cite{nycopen} datasets.  Since prior DP mechanisms need to know the number of requests up front, we create a workload of 5 requests using the following random datasets:
\vspace{-1mm}
\begin{myitemize}
\item \textbf{Regents}~\cite{Regent} contains 2014-17 regents exams data.
\item \textbf{ELA}~\cite{ELA} contains 2013-18 Early Learning Assessment (ELA) data.
\item \textbf{Gender}~\cite{gender} contains 2013-16 ELA data by grades and gender.
\item \textbf{Grad}~\cite{Grad} contains 2016-17 graduation outcomes.
\item \textbf{Math}~\cite{Math} contains 2013-18 Math grades.
\end{myitemize}
\vspace{-1mm}
\noindent For each request, we look for a single dataset to join/union with the requested dataset.  We turn off data discovery so every dataset in the platform is considered.   
By default, each dataset has \texttt{DP} budget $(\epsilon=1,\delta=10^{-6})$. We report the final $r2$ score evaluated non-privately. For reliability, we run each request 10 times.

\stitle{Baselines.} ~\revise{We consider different \texttt{DP} mechanisms.
\textbf{\texttt{Non-P}} doesn't use DP and provides $r2$ upper bound. 
\textbf{\fdp} applies \Cref{alg:fdpopt} to each dataset.
\textbf{APM} (Aggregate Privacy Mechanism), following Wang~\citep{wang2018revisiting}, applies \Cref{alg:fdpopt} to the augmented dataset to privatize the aggregated sufficient statistics (and requires a trusted search platform). We use attribute max-frequence from Flex~\cite{johnson2018towards} to bound join sensitivity.   Note that \texttt{APM} requires budget splits across all augmentations. 
\textbf{TPM} (Per-tuple Privacy Mechanism) applies \Cref{alg:fdpopt} to each tuple and uses half the $\epsilon$ to perturb the join key with generalized random response~\cite{kairouz2016discrete}.
\textbf{SF} is similar to \texttt{TPM}, but applies the Laplace mechanism to each tuple with an amplified budget then shuffles~\cite{feldman2022hiding,erlingsson2019amplification}. Since \texttt{SF} doesn't support joins (by $2^{nd}$-level aggregator), we only shuffle each dataset locally by $1^{st}$-level aggregators.
In each case, we use a failure mechanism that reports $r2=0$ if the privatized $\widetilde{\textbf{X}^T\textbf{X}}$ is not positive definite~\cite{Alabi22}.}

\stitle{Results.}
\Cref{exp:nycutility} shows the non-private $r2$ of 10 runs of private data search for the 5 requests. \fdp dominates the DP alternatives and is ${\sim}50{-}90\%$ of the non-DP case. \fdp's performance depends on dataset cardinality: the {\it Gender} dataset contains on average ${\sim}40$ tuples per join key (compared to ${>}100$ tuples per join key in other datasets) and is more vulnerable to noise.

We next vary the number of datasets by sampling
$n_{corp}{\in}\{10,50,\\100,300\}$ datasets and rerunning each baseline over the smaller corpus.  \Cref{exp:nycrepo} reports the median $r2$. For a small corpus ($n_{corp}=10$), \texttt{APM} outperforms \fdp because it imputes noise to the aggregated statistics across join key values and there are fewer budget splits, while \fdp has to add noise to the individual statistics for each join key. \texttt{TPM} and \texttt{SF} have low $r2$ due to high noise.

Finally, we vary the number of requests ($n_{req}{\in}\{1,10,50,100\}$) by sending the same request $n_{req}$ times, and report median $r2$.  \Cref{exp:nycrequest} shows that each baseline is almost invariant to $n_{req}$, and \fdp dominates. In theory, \texttt{APM} is worse for more requests but is already poor due to the large dataset corpus.

\vspace{-3mm}
\subsection{Synthetic Dataset Experiments}
\label{sec:abalationexp}

We next validate our theoretical analysis of linear regression using synthetic data, and conduct ablation tests to study the impact of various parameters (number of tuples $n$, \texttt{DP} budget $\epsilon,\delta$, corpus size $n_{corp}$, number of requests $n_{req}$ and join key domain size $d$).

\vspace{-2mm}
\subsubsection{Setup} 
We generate datasets by first creating a symmetric positive-definite matrix ({\footnotesize{\verb|make_spd_matrix|}} in $sklearn$) as the covariance $\textbf{X}'^T\textbf{X}'$. We then sample from a multivariate normal distribution with this covariance to create a relation. To ensure the $\ell_2$ norms of tuples $\leq B{=}5$, we resample for any tuples that exceed this limit.

By default, for union, we generate relations with $n=1000$ tuples and $3$ numerical attributes $[y, x_1, x_2]$. For join, we generate relations with $n=10000$ tuples and include a categorical join key $J$ uniformly distributed with a domain size of $d=100$ . We construct two vertical partitions with projections $[y, x_1,J]$ and $[x_2,J]$, respectively.
We start with $n_{corp}=2$ datasets, $n_{req}=1$ request.

We will report the $\ell_2$ distance to the non-private sufficient statistics ($s$ error) and regression parameter ($\beta$ error) as metrics. Each experiment will be repeated $100$ times, and we will present the medians (dots), as well as the $25th$ and $75th$ percentiles (error bars).

 \vspace{-2mm}
\subsubsection{DP for Union}
Baselines include \texttt{APM}, \texttt{TPM} (same as in \Cref{exp:nyc}), \texttt{SF-1}, which shuffles tuples locally, \texttt{SF-2}, which shuffles the unioned dataset, and \fdp using \Cref{alg:fdp} rather than \Cref{alg:fdpopt} (which is for join).

First, we vary $n{\in}\{10,100,500,1K, 10K\}$. \Cref{fig:serror} and \Cref{fig:betaerror} report $s$ and $\beta$ errors. Since there are $n_{corp}{=}2$ datasets, \texttt{APM} and \fdp perform similarly. In contrast, \texttt{TPM} requires quadratically more data to achieve the same $s$ errors, consistent with our analysis in \Cref{sec:utility}. \texttt{SF}'s amplification is not significant for small $n$, when it's needed most, and both variants have high $s$ errors.
$\beta$ error eventually converges to $0$ for all baselines, but \fdp does so at a comparable rate to \texttt{APM} ($n{=}500$ vs. $10K$ for the others).

\Cref{fig:fs_relation} shows that $\beta$ error naturally correlates with $s$ error, and higher $s$ error increases the chance of failure ($\beta$ error ${=}\infty$).  The remaining results will focus on $\beta$ error, as it is  of interest.

Next, we vary the \texttt{DP} budget $\epsilon{=}0.1$ or $\delta{=}0$ (pure \texttt{DP}). The results are shown in \Cref{fig:betaerror_small_eps} and \Cref{fig:betaerror_pure}, respectively. 
For $\epsilon{=}0.1$, the plot shifts right due to a smaller budget. In the case of pure \texttt{DP} with $\delta{=}0$, \fdp, \texttt{APM} and \texttt{TPM} can adapt to it by applying Laplace mechanism, achieving similar performance. In contrast, \texttt{SF-1} and \texttt{SF-2} fail as only approximate \texttt{DP} is supported.

\Cref{fig:betaerror_users} and \Cref{fig:betaerror_requests} vary the number of datasets and requests $n_{corp}, n_{req}{\in}\{1, 5, 10, 50, 100\}$, respectively. \fdp's $\beta$ error is flat. \texttt{TPM}, \texttt{SF-1} and \texttt{SF-2} frequently fail due to high noise, while \texttt{APM} only performed well when $n_{corp}{\leq}5$ or $n_{req}{\leq}10$. 
\textbf{\texttt{APM} is hence unsuitable for large data corpora.}

\Cref{fig:beta_error_unionopt} reports linear regression optimization benefit in \Cref{opt:noiseallocation}.  For a two-attribute dataset $R[y,x]$, while \fdp adds noise to monomials ($x,y,x^2,y^2,xy$), \fdpopt  adds noise to polynomials ($\sigma_{x}^2,\sigma_{xy}^2$) because we only care about $\beta_x$. We find that \fdpopt reduces the $\beta$ error and failure likelihood, especially for $n{<}100$.

\vspace{-2mm}
\subsubsection{DP for Join}
We evaluate different \texttt{DP} mechanisms over the join. 
Baselines include \fdpopt, which uses a smart allocation strategy to reduce the noise of lower order statistics, as discussed in \Cref{opt:noiseallocation}. \texttt{SF-2} doesn't support joins, so it is not reported.

We use $n_{corp}=2$ datasets: we fix cardinality $n=10K$ but vary join key domain size $d \in \{10,50,100,200\}$, then fix $d=100$ but vary $n \in \{100,5K,10K,50K\}$.
The results are shown in  \Cref{fig:join_betaerror_num_group} and \Cref{fig:join_betaerror_num_rows_per_group}, respectively.  \fdp, \fdpopt and \texttt{APM} have low $\beta$ error, while \texttt{TPM} and \texttt{SF} have high failure rates. \fdpopt outperforms \fdp due to better noise allocation.
\texttt{APM} outperforms \fdp and \fdpopt at large $d$ or small $n$ because \texttt{APM} adds noise directly to the aggregated statistics across join keys, resulting in a smaller amount of noise. In contrast, \fdp adds noise for each join key value. However, for large $n$, \fdp and \fdpopt outperform \texttt{APM} because it has high sensitivity due to high join fanouts~\cite{johnson2018towards}.

\Cref{fig:join_betaerror_num_users} and \Cref{fig:join_betaerror_requests} respectively vary the number of datasets and requests: $n_{corp}, n_{req}{\in}\{1, 5, 10, 50, 100\}$. 
Both \texttt{TPM} and \texttt{SF} have high failure rates, and \fdpopt outperforms \fdp.
\fdp and \fdpopt scale to arbitrary numbers of datasets and requests, while \texttt{APM} is restricted to $n_{corp}{\leq}5$ or $n_{req}{\leq}10$.

\begin{figure}
  \centering
    \begin{subfigure}{0.23\textwidth}
         \centering
         \includegraphics[width=\textwidth]{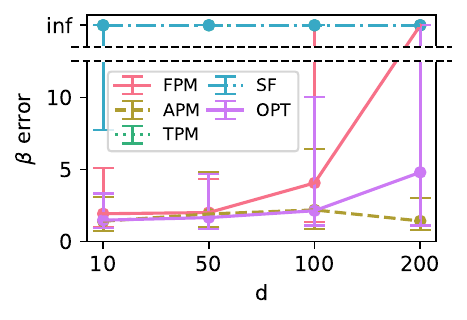}
         \vspace*{-8mm}
         \caption{$\beta$ error when varying $d$.}
    \label{fig:join_betaerror_num_group}
     \end{subfigure}
    \begin{subfigure}{0.23\textwidth}
         \centering
         \includegraphics[width=\textwidth]{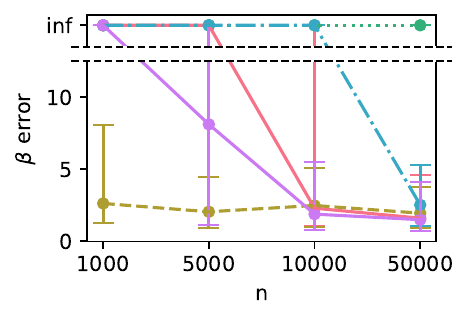}
         \vspace*{-8mm}
         \caption{$\beta$ error when varying $n$.}
         \label{fig:join_betaerror_num_rows_per_group}
     \end{subfigure}
    \begin{subfigure}{0.23\textwidth}
         \centering
         \includegraphics[width=\textwidth]{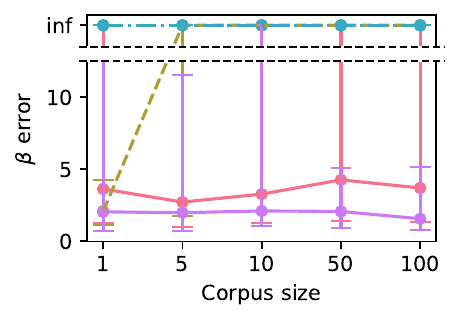}
         \vspace*{-7mm}
         \caption{$\beta$ error when varying $n_{corp}$.}
         \label{fig:join_betaerror_num_users}
     \end{subfigure}
    \begin{subfigure}{0.23\textwidth}
         \centering
         \includegraphics[width=\textwidth]{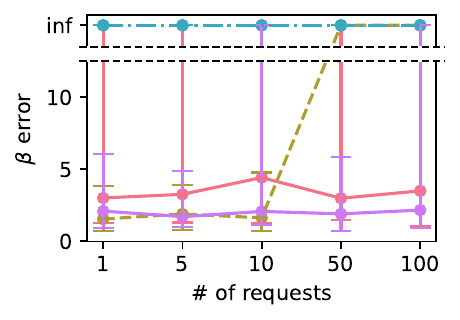}
         \vspace*{-7mm}
         \caption{$\beta$ error when varying $n_{req}$.}
         \label{fig:join_betaerror_requests}
     \end{subfigure}
     \vspace*{-4mm}
  \caption{Ablation tests for joins: (a,b). \fdp, \fdpopt and \texttt{APM} provide low error when varying $d$ and $n$, while \texttt{TPM} and \texttt{SF} are dominated by high noises (c,d). \fdp, \fdpopt show scalability for large repositories with numerous requests, whereas \texttt{APM} has high errors when $n_{corp}{>}5$ or $n_{req}{>}10$.}
  \vspace*{-6mm}
\end{figure}

\begin{figure}
  \centering
    \begin{subfigure}{0.26\textwidth}
         \centering
         \includegraphics[width=\textwidth]{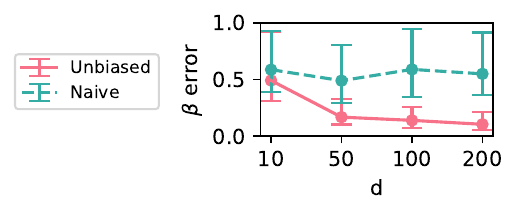}
         \vspace*{-7mm}
         \caption{$\beta$ error varying $d$.}
         \label{fig:betaerror_biased_group}
     \end{subfigure}
    \begin{subfigure}{0.17\textwidth}
         \centering
         \includegraphics[width=\textwidth]{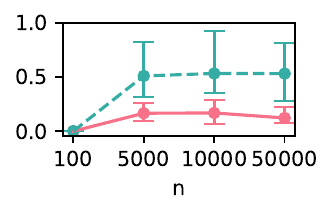}
         \vspace*{-7mm}
         \caption{$\beta$ error varying $n$.}
         \label{fig:betaerror_biased_num_rows_per_group}
     \end{subfigure}
     \vspace*{-5mm}
  \caption{$\beta$ error for naive and unbiased estimators over join. The unbiased estimator dominates the naive one.}
  \vspace*{-6mm}
\end{figure}

\vspace{-2mm}
\subsubsection{Join Unbiased estimator} 
Here, we compare the $\beta$ error of the unbiased estimator proposed in \Cref{sec:joinassumption} to the naive estimator over many-to-many joins. 
We first fix the number of tuples $n=10K$ but vary join key domain size $d \in \{10,50,100,200\}$, then fix $d=100$ but vary $n \in \{100,5K,10K,50K\}$, and report the results in \Cref{fig:betaerror_biased_group} and \Cref{fig:betaerror_biased_num_rows_per_group} respectively. 
As $d$ increases, the errors of the unbiased estimator converge to 0, while the biased estimator diverges as it fails to account for many-to-many join. When $n=100$, the naive estimator achieves similar performance, as each join key has only one tuple (so one-to-one join without bias). However, increasing $n$ introduces duplications and independence (for many-to-many join). The unbiased estimator reduces the noise and performs better than the naive estimator.

\vspace{-3mm}
\section{Related Works}
\label{sec:related}
\vspace{-1mm}
\stitle{Dataset search.}
Traditional data discovery focuses on augmentable (i.e., joinable or unionable) datasets~\cite{castelo2021auctus,fernandez2018aurum}, whereas
recent dataset search platforms~\cite{santos2022sketch,chepurko2020arda,nargesian2022responsible,li2021data} are based on data augmentation {\it for ML tasks}.
However, none addresses privacy concerns.

\stitle{Differential Privacy for Databases.}
Differentially private databases can query over multiple tables~\cite{johnson2018towards, wilson2019differentially, kotsogiannis2019privatesql}. They apply \texttt{DP} mechanisms to query results over joins and unions. Notably, join poses a significant  \texttt{DP} challenge due to the exponential sensitivity growth along the join path. \fdp may offer a solution by decomposing join query into smaller, bounded-sensitivity statistics.

\stitle{Federated ML.} 
These methods let each untrusted party compute and privatize their local gradients for horizontal~\cite{wei2020federated,shokri2015privacy,truex2020ldp,zhao2020local} or vertical~\cite{wang2020hybrid,hardy2017private} federated ML, which are then combined to train the final model. However, the gradient is specific to training a single model. In contrast, data search repeatedly trains new models to evaluate candidate augmentations, requiring budget splits.

\stitle{Differentially Private Sufficient Statistics.} 
Previous works use sufficient statistics~\cite{huggins2017pass} for generalized linear models and apply perturbations~\cite{kulkarni2021differentially} to guarantee  \texttt{DP}. For linear regression, sufficient statistics perturbation, particularly with regularization, outperforms other \gdp mechanisms including objective perturbation and noisy SGD~\cite{wang2018revisiting, NEURIPS2022_5bc3356e}.
However, they only consider ML on a single dataset.

\stitle{Factorized ML.}
Factorized ML decomposes ML models into semi-ring queries, designs algebraic operators to combine them, and achieves asymptotically lower time complexity.  They support models like ridge regression~\cite{schleich2016learning}, random forests~\cite{joinboost}, SVM~\cite{khamis2020functional}, and factorization machine~\cite{schleich2019layered}.  
None are differentially private.
\revise{We are the first to apply \texttt{DP} to factorized linear regression. Future work aims to extend \sys to other proxy models like random forests used in prior data search~\cite{chepurko2020arda}. The challenge lies in the lack of a closed-form solution in random forests,  requiring iterative computation of semi-ring aggregates based on tree splits, which are not reusable and costly to privatize. 
To improve search utility, we plan to explore  (1) alternative trust models where requesters trust the platform to lessen  noise, and (2) differentially private synopses~\cite{tantipongpipat2021differentially, yoon2018pategan} based on monomial semi-ring to generate synthetic data, allowing the training of arbitrarily  complex models as post-processing.}

\vspace*{-2mm}
\vspace{-2mm}
\section{Conclusions}
\sys is a differentially private data search platform that searches large corpora to find datasets to improve ML performance via augmentation.
\sys employs \fdp, a novel mechanism that privatizes sufficient semi-ring statistics, which can be reused without incurring additional \texttt{DP} cost. 
In a deep study of linear regression, we propose an unbiased estimator for many-to-many joins, prove parameter bounds under augmentations, and propose an optimization to allocate \texttt{DP} budget better.
On a ${>}300$ dataset corpus, \fdp achieves an $r2$ score (${\sim}50{-}90\%)$ of non-private search, while other mechanisms (\texttt{TPM}, \texttt{APM}, \texttt{SF}) report negligible $r2$ scores ${<}0.02$.

\pagebreak

\bibliographystyle{ACM-Reference-Format}
\bibliography{main}

\clearpage
\appendix
\section{\fdp Sensitivity}
\label{sensitivity}

For union, query $q_u: \mathcal{D}^n \longrightarrow \mathcal{S}$ where datasets in $\mathcal{D}^n$ contains $m$ features and $\mathcal{S} = \{v \ | \ v \in \mathbb{R}^{m^i}, \ i \in \{0, \dots, k\}\}$. $q_u$ 
returns a set of vectors $s \in \mathcal{S}$ containing the sum of $k$-order monomial semi-ring across all tuples. For analysis convenience, we will overload the notation a bit and treat $s$ as a single vector by concatenating $\{v\}_{v \in s}$. Let $t_1, t_2 \in \mathcal{D}^n$ and $t^i_1, t^i_2$ be vectors of $i$-order monomial with respect to $t_1[f_1], \dots, t_1[f_m]$ and $t_2[f_1], \dots, t_2[f_m]$. Let $\sigma(k)$ denote the set of series $[k_1, \dots, k_m]$ such that $k_i \in \mathbb{N}$ and $\sum k_i = k$. The squared distance between $t^k_1$ and $t^k_2$ can be computed as
\begin{align*}
    &\quad \sum_{\substack{[k_1, \dots, k_m] \\ \in \sigma(k)}}(\prod_{i=1}^mt_1[f_i]^{k_i} - \prod_{i=1}^mt_2[f_i]^{k_i})^2\\
    &\le \sum_{\substack{[k_1, \dots, k_m] \\ \in \sigma(k)}}{k \choose k_1, \dots, k_m}(\prod_{i=1}^mt_1[f_i]^{k_i} - \prod_{i=1}^mt_2[f_i]^{k_i})^2
\end{align*}

By multinomial theorem, we may rewrite the last equation as 
\begin{align*}
    &\quad (\sum_{i=1}^m t_1[f_i]^2)^k + (\sum_{i=1}^m t_2[f_i]^2)^k - 2(\sum_{i=1}^m t_1[f_i]t_2[f_i])^k\\
    &\le 2B^{2k} -  2(\sum_{i=1}^m t_1[f_i]t_2[f_i])^k
\end{align*}

That is, when $k$ is even, the latter term is strictly positive. Hence $\|t_1^k - t_2^k\|^2 \le 2B^{2k}$. Let $D_1, D_2 \in \calD^n$ be two neighbouring datasets differ in one tuple, $t_1$ and $t_2$. The sensitivity of $q_u(\cdot)$ can be computed as 
$$\Delta_{q_u} = \max\|q_u(D_1) - q_u(D_2)\| \le \sqrt{\sum_{i=1}^k 1_{2\mathbb{Z} + 1}(i)4B^{2i} + 1_{2\mathbb{Z}}(i)2B^{2i}}$$

For join, we inherit the notations from the union case and let $\ell$ be the number of join keys. $q_j: \mathcal{D}^n \longrightarrow \mathcal{S}$ returns a set of vectors $s \in \mathcal{S}$ where $s$ concatenates vectors returned by $q_u$ on each partition of tuples for each join key. Consider two cases: (1) $t_1$ and $t_2$ have the same join key (2) $t_1$ and $t_2$ have different join keys. In the former case,
\begin{align*}
    \Delta_{q_j} &= \max\|q_j(D_1) - q_j(D_2)\|_2\\
    &\le \sqrt{\sum_{i=1}^k 1_{2\mathbb{Z} + 1}(i)4B^{2i} + 1_{2\mathbb{Z}}(i)2B^{2i}}
\end{align*}

In the latter case, 
\begin{align*}
    \Delta_{q_j} &= \max\|q_j(D_1) - q_j(D_2)\|_2\\
    &= \max\sqrt{\sum_i^k \|t_1^i\|^2 + \sum_i^k \|t_2^i\|^2}\\
    &\le \sqrt{\max\sum_i^k \|t_1^i\|^2 + \max\sum_i^k \|t_2^i\|^2}\\
    &\le \sqrt{2\sum_{i=0}^k B^{2i}}
\end{align*}

Hence, $\Delta_{q_j} = \max(\sqrt{\sum_{i=1}^k 1_{2\mathbb{Z} + 1}(i)4B^{2i} + 1_{2\mathbb{Z}}(i)2B^{2i}}, \sqrt{2\sum_{i=0}^k B^{2i}})$

\section{Error Analysis}
\label{utility}
\begin{table}[t!]
    \centering
    \caption{Notation}\label{table:notation} \vspace{-3mm}
    \begin{tabular}{r p{6.5cm}}
        \textbf{Notation} & \textbf{Description}\\ \hline
        
        $R_i$ & relations of providers/requesters.\\

        $n$ & size of each relation.\\
        
        $J, Dom(J), d$ & join key, domain of join key, domain size.\\
        
        $B$ & the $\ell_2$ distance upper bound of each tuple in each relation.\\

        $\var, \cov$ & the empirical estimation of the variance and covariance.\\

        $\pvar, \pcov$ & the privatized empirical estimation of the variance and covariance.\\

        $B_1, B_2$ & $1-p$ confidence bound on $|\var - \pvar|$ and $|\cov - \pcov|$.\\
        \hline
    \end{tabular} 
\end{table}
For a single data provider, with relation $R$ where $|R|=n$ and $i$ be any integer from 1 to $k$. \ldp computes $[f, f^2, \dots, f^k]$ for each tuple and adds noise to each of them. By similar analysis to that of \fdp, \ldp's sensitivity is the same as $\Delta$ in \fdp for both union and join. Hence, for each tuple, $t$, from $R$, $t[\widetilde{f^i}] \sim t[f^i] + \calN(0, (2\ln(1.25/\delta)\Delta^2/\epsilon^2)$.
The empirical expectation of $f^i$ can be computed as $$\widetilde{f_\ldp^i} = \frac{1}{n} \left(\sum_{t \in R} t[f^i]\right) + e_i \quad e_i \sim \calN(0, 2\ln(1.25/\delta)\Delta^2/n\epsilon^2)$$
Putting everything together, and by the assumption that $k$ is a small constant, we have

\begin{align*}
    E[\|s'_\ldp - \hat{s}\|] &= E\left[\sqrt{\sum_{i=1}^k \left(\widetilde{f_\ldp^i} - \frac{1}{n} \left(\sum_{t \in R} t[f^i]\right)\right)^2}\right]\\
    &\le \sqrt{E\left[\sum_{i=1}^k \left(\widetilde{f_\ldp^i} - \frac{1}{n} \left(\sum_{t\in R} t[f^i]\right)\right)^2\right]}\\
    &= \sqrt{\sum_{i=1}^k E[e_i^2]}\\
    &= O(\Delta/\sqrt{n}\epsilon)
\end{align*}


For \fdp, the only difference is that 
$$\widetilde{f_\fdp^i} \sim \frac{1}{n} \left(\sum_{t \in R} t[f^i] \right) + e_i \quad e_i \sim \calN(0, 2\ln(1.25/\delta)\Delta^2/n^2\epsilon^2)$$

Following the same line of derivation, 
$$E[\|\widetilde{s_\fdp} - \hat{s}\|] = O(\Delta/n\epsilon)$$

However, \gdp needs to account for any possible combination of a single buyer and a subset of sellers, where each party's privacy needs to be protected. Specifically, each buyer appears in $2^{n_{corp}} - 1$ combinations, since each buyer requires at least one seller. On the other hand, for a fixed buyer, each seller is involved in $2^{n_{corp}-1}$ combinations. Hence, each seller will appear in $n_{req}2^{n_{corp}-1}$ combinations in total. Because each seller and buyer have privacy budget $(\epsilon, \delta)$, in order to provide privacy guarantees for each party in any combination, the amount of privacy budget spent on perturbing pre-normalized $s$ is $\epsilon' = \min(\epsilon/(2^{n_{corp}} - 1), \epsilon/n_{req}2^{n_{corp}-1})$ and $\delta' = \min(\delta/(2^{n_{corp}} - 1), \delta/n_{req}2^{n_{corp}-1})$ 

$$\widetilde{f^i_\gdp} = \frac{1}{n} \left(\sum_{t\in R} t[f^i] \right) + e_i \quad e_i \sim \calN(0, 2\ln(1.25/\delta')\Delta^2/n^2\epsilon'^2)$$

Based on the same line of analysis above
$$E[\|\widetilde{s_\gdp} - \hat{s}\|] = O(n_{req}2^{n_{corp} - 1}\Delta/n\epsilon)$$

Now we consider \texttt{SF-1}, based on \cite{erlingsson2019amplification}, it suffice to guarantee $\epsilon/\sqrt{n}$-DP for local responses to achieve $(\epsilon, \delta)$-DP from the central's perspective, where each tuple $t$ in $R$ satisfies $t[\widetilde{f^i}] \sim t[f^i] + \Lap(\Delta/\sqrt{n}\epsilon)$. Then we have $$\widetilde{f_{\texttt{SF-1}}^i} = \frac{1}{n} \sum_{t \in R} (t[f^i]  + e_{t}) \quad e_{t} \sim \Lap(\Delta/\sqrt{n}\epsilon)$$

Then, we have
\begin{align*}
    E[\|\widetilde{\texttt{SF-1}} - \hat{s}\|] &= E\left[\sqrt{\sum_{i=1}^k \left(\widetilde{f_{\texttt{SF-1}}^i} - \frac{1}{n} \left(\sum_{t \in R} t[f^i]\right)\right)^2}\right]\\
    &\le \sqrt{\sum_{i=1}^k E\left[\left(\frac{1}{n} \sum_{t \in R} e_{t}\right)^2\right]}
\end{align*}

Since $E\left[\left(\frac{1}{n} \sum_{t \in R} e_{t}\right)\right] = 0$, it follows that $$E\left[\left(\frac{1}{n} \sum_{t \in R} e_{t}\right)^2\right] = Var\left(\frac{1}{n} \sum_{t \in R} e_{t}\right) = \frac{\Delta^2}{n^2\epsilon^2}$$

Substituting back to the equation, and based on assumption that $k$ is small, we have $$E[\|\widetilde{s_{\texttt{SF-1}}} - \hat{s}\|] = O(\Delta/n\epsilon)$$

For \sdp-2, just like \gdp, it also needs to account for all possible combination of a single buyer and any subsets of sellers in the centralized shuffler. However, the differences are that \texttt{SF-2} allows each combination's privacy guarantee to be amplified by an amount of $O(\sqrt{n})$, and that \texttt{SF-2} draw random noises from Laplace distribution instead of Gaussian distribution. That is,

$$\widetilde{f_{\texttt{SF-2}}^i} = \frac{1}{n} \sum_{t \in R} (t[f^i]  + e_{t}) \quad e_{t} \sim \Lap(\Delta/\sqrt{n}\epsilon')$$

Hence, the expected utility can be computed following the same line of derivation of \texttt{SF-1}. That is 

$$E[\|\widetilde{s_{\texttt{SF-2}}} - \hat{s}\|] = O(\Delta/n\epsilon') = O(n_{req}2^{n_{corp}-1}\Delta/n\epsilon)$$

\section{Unbiased proof}
\label{sec:unbiasedproof}
We make the simplifying assumption that $J$ is uniformly distributed: if $d=|dom(J)|$, then each $j \in J$ appears $n/d$ times in $R$. Moreover, the projection operator $\pi$ will not remove duplicates in $R$ so $|\pi_{J, f_1}(R)| = |\pi_{J, f_2}(R)| = n$.

\begin{prop}[Expected $s$ over $R_\Join$] Assume that $R_\Join$ is the population. For any other 1,2-order monomial $p$,
\begin{equation*}
E[p] = s[p]/s[c]
\end{equation*}
where c is the count (0-order monomial). Then $E[p]$ is the expected  $s$ over $R_\Join$.
\end{prop}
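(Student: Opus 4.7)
The plan is to unfold the definitions of the monomial semi-ring operators and the aggregation query, then invoke the population assumption to reduce expected values to averages. At a high level, once one checks that the semi-ring machinery computes the same thing as a brute-force materialize-then-aggregate, the proposition reduces to the elementary fact that the empirical average of $p$ over a finite population equals $(\sum_{t} p(t))/|R_\Join|$.

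First, I would make explicit the annotation assigned to each base tuple: for $t \in R_i$, the annotation records $c=1$ and, in each higher-order slot, the monomial values $\prod_j (t[f_j])^{k_j}$. I would then show, by induction over the query tree of $R_\Join = R_1 \Join \cdots \Join R_k$, that the annotation $R_\Join(t)$ obtained by applying the semi-ring $\times$ from \Cref{s:backgroundmsgpassing} coincides with the base-tuple annotation one would assign directly to $t$ viewed as a row over the union of schemas. The crucial inductive step is the multiplicative case: by definition of $\times$, the $i$-th subvector of $a \times b$ is $\sum_{j=0}^{i} s^a_j \otimes s^b_{i-j}$, which is exactly the multinomial expansion of $(\text{features of } a \cup \text{features of } b)^i$. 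Hence every $i$-order monomial of the combined feature vector appears as some component of the product, with the correct value.

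Given this, the group-by $\gamma$ applied to $R_\Join$ sums the annotations across all joined tuples (per the first semi-ring query rule in \Cref{s:backgroundmsgpassing}), so for any $1,2$-order monomial $p$ we have
\begin{equation*}
s[p] \;=\; \sum_{t \in R_\Join} p(t), \qquad s[c] \;=\; \sum_{t \in R_\Join} 1 \;=\; |R_\Join|.
\end{equation*}
Finally, under the stated assumption that $R_\Join$ is the population, the expectation of $p$ is by definition the uniform average over $R_\Join$, i.e.\ $E[p] = \tfrac{1}{|R_\Join|}\sum_{t\in R_\Join} p(t)$. Dividing the two displayed equalities yields $E[p] = s[p]/s[c]$, which is the claim.

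I expect the main obstacle to be the inductive step verifying that repeated applications of the semi-ring $\times$ (with its tensor-product sum $\sum_{j} s^a_j \otimes s^b_{i-j}$) really do recover every permutation-with-repetition monomial of order $i$ across all features from both sides of the join; this is a bookkeeping exercise in indexing the tensor components by feature-subset multisets, and needs care to avoid double-counting when the same feature can be reached through different subterms. Once that combinatorial identity is pinned down, the rest of the proof is purely definitional.
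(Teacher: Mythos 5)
Your proposal is correct, but there is nothing in the paper to compare it against: the proposition is stated without proof, and the \texttt{proof} environment that follows it in the appendix belongs to the subsequent Unbiased Estimator proposition. The paper implicitly treats this statement as definitional, which is exactly your last step --- once $s=\gamma(R_\Join)$ is the componentwise sum of per-tuple monomials and $s[c]=|R_\Join|$, the claim is just the definition of an expectation over a finite uniform population. What you add beyond the paper is the inductive verification that repeated semi-ring multiplication reproduces the monomial annotations of the materialized join; the paper inherits this equivalence from the factorized-ML literature it cites rather than re-deriving it. On the one obstacle you flag: the product subvector $\sum_{j=0}^{i} s^a_j \otimes s^b_{i-j}$ has $\sum_{j=0}^{i} p^j q^{i-j}$ components when the operands carry $p$ and $q$ features, which is strictly fewer than the $(p+q)^i$ permutations-with-repetition one would list for the combined row, so the product is not literally equal to the vector annotation of the joined tuple --- a mixed monomial appears with multiplicity equal to the product of the permutation counts of its two halves rather than the permutation count of the whole. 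Since the proposition only concerns the value stored under a given monomial key (the dictionary representation the paper adopts), not the vector layout or multiplicities, this mismatch is harmless; but you should state your induction hypothesis at the level of multiset-indexed dictionary entries rather than raw vector components, precisely to sidestep the double-counting you anticipate.
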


\begin{prop}[Unbiased Estimator of $s$ over $R$]
\begin{equation*}
\widehat{E[p]} = \begin{cases}
\mathrlap{f_1f_2 = \frac{1-n}{1-d} \frac{s[f_1f_2]}{s[c]} + \frac{n-d}{1-d} \frac{s[f_1]}{s[c]} \frac{s[f_2]}{s[c]}}\\
&\text{for $f_1 \in F_1, f_2 \in F_2$}\\
p = s[p]/s[c] &\text{for any other monomial $p$}\\
\end{cases}
\end{equation*}
$\hat{s}$ is an unbiased estimator of $s$.
\end{prop}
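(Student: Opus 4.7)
My plan is to verify unbiasedness by expanding both terms of the estimator under the standard i.i.d.\ sampling model in which each row of $R$ has its join key drawn from $dom(J)$ (conditioned on the deterministic counts $n/d$ per value) and its non-key features independent of $J$ and of other rows. First, for any monomial $p$ whose factors all lie in a single projection, I would note that each original row $r\in R$ is replicated exactly $k=n/d$ times in $R_\Join$, one per matching partner on the other side. Hence $s[p]=k\sum_{r\in R}r[p]$ while $s[c]=dk^2=n^2/d$, so $s[p]/s[c]=\tfrac{1}{n}\sum_{r\in R}r[p]$ is literally the sample mean of $p$, immediately unbiased for $E[p]$. This handles the second branch of the definition, and gives the useful identity $s[f_i]/s[c]=\bar f_i$ that will be reused below.

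For the cross case $f_1\in F_1\setminus J$, $f_2\in F_2\setminus J$, I would decompose the join output by key. Fixing $j\in dom(J)$, the $k$ tuples on each side yield $k^2$ output rows: $k$ ``diagonal'' pairs whose two halves come from the same original row of $R$, and $k(k-1)$ ``off-diagonal'' pairs whose halves come from distinct rows sharing key $j$. Summing across the $d$ keys gives $n$ diagonal and $n^2/d-n$ off-diagonal rows. Invoking the non-correlation/independence assumption, the diagonal rows contribute $E[f_1 f_2]$ on average and the off-diagonal rows contribute $E[f_1]\,E[f_2]$, so
\[
E\!\left[\tfrac{s[f_1 f_2]}{s[c]}\right]=\tfrac{d}{n}E[f_1 f_2]+\Bigl(1-\tfrac{d}{n}\Bigr)E[f_1]\,E[f_2].
\]
An identical diagonal/off-diagonal expansion of $\bar f_1\bar f_2=\tfrac{1}{n^2}\sum_{i,i'}r_i[f_1]\,r_{i'}[f_2]$ yields $E[\bar f_1\bar f_2]=\tfrac{1}{n}E[f_1 f_2]+\tfrac{n-1}{n}E[f_1]\,E[f_2]$.

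To finish, I would write the candidate as $\alpha\cdot s[f_1 f_2]/s[c]+\beta\cdot\bar f_1\bar f_2$ and enforce the two linear conditions that the coefficient of $E[f_1 f_2]$ equals $1$ and that of $E[f_1]\,E[f_2]$ equals $0$. This is the $2\times 2$ system $\alpha d+\beta=n$ and $\alpha(n-d)+\beta(n-1)=0$, with unique solution $\alpha=(1-n)/(1-d)$ and $\beta=(n-d)/(1-d)$, matching the proposition exactly. The main obstacle is really the diagonal/off-diagonal counting step: it relies crucially on the uniformity of $J$ to make $s[c]$ deterministic (avoiding any ``ratio-of-expectations vs.\ expectation-of-ratios'' headaches) and on the non-correlation assumption to factor $E[r_i[f_1]\,r_{i'}[f_2]]=E[f_1]\,E[f_2]$ across distinct rows. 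Once that is set up, the rest reduces to routine algebra, and the same step-by-step expansion extends recursively to more general join/union compositions.
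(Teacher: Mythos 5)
Your proof is correct and follows essentially the same route as the paper's: both reduce the problem to computing $E[s[f_1f_2]/s[c]]$ and $E\bigl[(s[f_1]/s[c])(s[f_2]/s[c])\bigr]$ as linear combinations of $E[f_1f_2]$ and $E[f_1]E[f_2]$, then combine them with the coefficients $\frac{1-n}{1-d}$ and $\frac{n-d}{1-d}$. The only cosmetic differences are that you obtain these expectations by a direct diagonal/off-diagonal counting of join output rows where the paper uses covariance identities (the two calculations are term-for-term identical), and you derive the coefficients by solving the $2\times 2$ system rather than verifying them by substitution.
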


\begin{proof}
We demonstrate that, for any 1,2-order monomial where features are from the same relation, $E[s[p]/s[c]] = E[p]$. 
\begin{align*}
s[c] &= n \cdot n/d \\
E[s[f]/s[c]] &= E[(\sum_{t \in R} t[f] \cdot n/d)/(n \cdot n/d )] = \sum_{t \in R} E[t[f]] /n = E[f] \\
E[s[f_1 f_2]/s[c]] &= E[(\sum_{t \in R} t[f_1]\cdot t[f_2] \cdot n/d)/(n \cdot n/d )] \\
& = \sum_{t \in R} E[t[f_1]\cdot t[f_2]] /n = E[f_1 f_2]    
\end{align*}
The first equality is because for each join key, the cartesian product is computed, leading to duplication of tuples with the same join key in both tables by $n/d$ times. The count is also increased by $n/d$, thus resulting in the equality $s[p]/s[c] = E[p]$.

However, this equality does not hold for the  $f_1 f_2$, where $f_1$ and $f_2$ are from different relations. In this case, $f_1$ from $R_1$ is paired with all $f_2$ from $R_2$ with the same join key, but the information about which $f_2$ is paired with $f_1$ in original $R$ is lost. Nonetheless, we can still estimate $E[f_1 f_2]$ by exploiting the covariance across groups.

We first analyze $E[f_1 f_2]$ for a single join key value $j$. We use notation $s^j$ to denote the monomial semi-ring for the join key value $k$. Consider random variable of the average:
\begin{align*}
s^j_1 =& s^j[f_1]/s^j[c] = \left(\sum_{t \in \sigma_{j}(R)}t[f_1]
\cdot n/d\right)/(n/d)^2 = \sum_{t \in \sigma_{j}(R)}\frac{t[f_1]}{n/d}\\
s^j_2 =& s^j[f_2]/s^j[c] = \left(\sum_{t \in \sigma_{j}(R)}t[f_2]\cdot n/d\right)/(n/d)^2 = \sum_{t \in \sigma_{j}(R)}\frac{t[f_2]}{n/d}
\end{align*}
$s^j_1$ and $s^j_2$ can be understood as the mean of $f_1$ and $f_2$ from the sample $\sigma_{j}(R)$. It is obvious that $E[s^j_1] = E[f_1]$ and $E[s^j_2] = E[f_2]$.

From the definition of covariance, we have: 
\begin{align*}
E[s^j_1 s^j_2]&= cov(s^j_1,s^j_2) + E[s^j_1]E[s^j_2] \\
&= cov\left(\sum_{t \in \sigma_{j}(R)}\frac{t[f_1]}{n/d},\sum_{t \in \sigma_{j}(R)}\frac{t[f_2]}{n/d}\right) + E[f_1]E[f_2]
\end{align*}

We next compute the $cov$:
\begin{align*}
cov\left(\sum_{t \in \sigma_{j}(R)}\frac{t[f_1]}{n/d},\sum_{t \in \sigma_{j}(R)}\frac{t[f_2]}{n/d}\right)&= \frac{d^2}{n^2}\sum_{\substack{t_1 \in \sigma_{j}(R)\\t_2 \in \sigma_{j}(R)}}cov(t_1[f_1],t_2[f_2])\\
&=\frac{d^2}{n^2}\sum_{t \in \sigma_{j}(R)}cov(t[f_1],t[f_2])\\
&=\frac{d}{n}cov(f_1,f_2)
\end{align*}
The first line is by the property of covariance and the second line is by the independence between tuples. Therefore,
\begin{align*}
E[s^j_1 s^j_2]&= \frac{d}{n}cov(f_1,f_2) + E[f_1]E[f_2]
\end{align*}
Next, consider the random variables across join keys:
\begin{align*}
s_1 =& s[f_1]/s[c] = \sum_{t \in R}t[f_1]/n\\
s_2 =& s[f_2]/s[c] = \sum_{t \in R}t[f_2]/n\\
s_{1,2} =& s[f_1f_2]/s[c] = \sum_{j\in dom(J)} s^j_1\cdot s^j_2/d
\end{align*}
where $s_1$ and $s_2$ are the average across join keys. $s_{1,2}$ is the average products across join keys. It is obvious that $E[s_1] = E[f_1],E[s_2] = E[f_2]$.
We next study $E[s_1s_2]$ and $E[s_{1,2}]$:
\begin{align*}
E[s_1 s_2]&= cov(s_1,s_2) + E[s^j_1]E[s^j_2] \\
&= cov\left(\sum_{t \in R}t[f_1]/n,\sum_{t \in R}t[f_2]/n\right) + E[f_1]E[f_2]
\end{align*}

Similar as before,
\begin{align*}
cov\left(\sum_{t \in R}t[f_1]/n,\sum_{t \in R}t[f_2]/n\right)&= \frac{1}{n^2}\sum_{\substack{t_1 \in R\\t_2 \in R}}cov(t_1[f_1],t_2[f_2])\\
&=\frac{1}{n^2}\sum_{t \in R}cov(t[f_1],t[f_2])
=\frac{1}{n}cov(f_1,f_2)
\end{align*}

Therefore: 
\begin{align*}
E[s_1 s_2]&= \frac{1}{n}cov(f_1,f_2) + E[f_1]E[f_2]
\end{align*}

Finally,
\begin{align*}
E[s_{1,2}]&= \sum_{j\in dom(J)} E[s^j_1\cdot s^j_2]/d = \frac{d}{n}cov(f_1,f_2) + E[f_1]E[f_2]
\end{align*}

Putting everything together, we show that 
$\frac{1-n}{1-d} s_{1,2} + \frac{n-d}{1-d} s_1\cdot s_2$ is an unbiased estimator of $E[f_1f_2]$:
\begin{align*}
E[\frac{1-n}{1-d} s_{1,2} + \frac{n-d}{1-d} s_1\cdot s_2] =& \frac{1-n}{1-d} E[s_{1,2}] + \frac{n-d}{1-d} E[s_1s_2]\\
=& \frac{1-n}{1-d}(\frac{d}{n}cov(f_1,f_2) + E[f_1]E[f_2])+ \\
&\frac{n-d}{1-d}(\frac{1}{n}cov(f_1,f_2) + E[f_1]E[f_2])\\
=& cov(f_1,f_2) + E[f_1]E[f_2] = E[f_1f_2]
\end{align*}
The first line is by the linearity of expectation, and the last line is by the definition of covariance.

\end{proof}
\section{Confidence Bound of linear regression}
\label{app:linear}

Let $\sigma=\sqrt{2\ln(1.25/\delta)}\Delta/\epsilon$ where $\Delta = O(B^2)$. We are interested in $n, B \rightarrow \infty$ and $\epsilon, \delta, p \rightarrow 0$ in our analysis. Hence $\sigma = O\left(\frac{B^2\sqrt{\ln(1/\delta)}}{\epsilon}\right)$. The privatized empirical expectation of the moments are defined as $\widetilde{E[X]} = \widehat{E[X]} + e_1, \widetilde{E[X^2]} = \widehat{E[X^2]} + e_2, \widetilde{E[XY]} = \widehat{E[XY]} + e_3$ and $\widetilde{E[Y]} = \widehat{E[Y]} + e_4$. Then, we have $e_1, e_2, e_3, e_4 \sim \mathcal{N}(0,\sigma^2 / n^2)$
\begin{lemma}[High-probability bound on $\widetilde{\sigma_x}$]
\label{lem:varxbound}
Given $\var = \widehat{E[X^2]} - \widehat{E[X]}^2$ and $\pvar = \widetilde{E[X^2]} - \widetilde{E[X]}^2$, with probability at least $1-p$, $|\var - \pvar| = O(B_1)$ where $$B_1 = \frac{B^4\ln(1/\delta)\ln(1/p)}{\epsilon^2 n}$$
\end{lemma}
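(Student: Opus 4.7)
The plan is to expand $\pvar - \var$ as a polynomial in the Gaussian noises $e_1,e_2$ and then bound each resulting term by combining the standard Gaussian tail bound with the tuple-norm assumption. Substituting $\widetilde{E[X^2]} = \widehat{E[X^2]} + e_2$ and $\widetilde{E[X]} = \widehat{E[X]} + e_1$ into the two definitions and simplifying gives
\[
\pvar - \var \;=\; e_2 \;-\; 2\,\widehat{E[X]}\,e_1 \;-\; e_1^{\,2},
\]
so by the triangle inequality it suffices to bound $|e_2|$, $|\widehat{E[X]}|\cdot|e_1|$, and $e_1^{\,2}$ simultaneously with probability at least $1-p$.

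Next I would apply the standard Gaussian tail bound to $e_1,e_2 \sim \mathcal{N}(0,\sigma^2/n^2)$, each with failure probability $p/2$, and take a union bound. This yields
\[
|e_1|,\,|e_2| \;=\; O\!\left(\tfrac{\sigma\sqrt{\ln(1/p)}}{n}\right) \;=\; O\!\left(\tfrac{B^{2}\sqrt{\ln(1/\delta)\ln(1/p)}}{\epsilon\, n}\right),
\]
using $\sigma = O(B^{2}\sqrt{\ln(1/\delta)}/\epsilon)$. The assumption that each tuple has $\ell_2$-norm at most $B$ gives $|\widehat{E[X]}|\leq B$, so on the good event the three terms are controlled by $O\!\bigl(B^{2}\sqrt{\ln(1/\delta)\ln(1/p)}/(\epsilon n)\bigr)$, $O\!\bigl(B^{3}\sqrt{\ln(1/\delta)\ln(1/p)}/(\epsilon n)\bigr)$, and $O\!\bigl(B^{4}\ln(1/\delta)\ln(1/p)/(\epsilon^{2} n^{2})\bigr)$ respectively.

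Finally I would verify that each of these is $O(B_1)$ in the regime $B,n\to\infty$, $\epsilon,\delta,p\to 0$: the ratios to $B_1 = B^{4}\ln(1/\delta)\ln(1/p)/(\epsilon^{2}n)$ are $\epsilon/(B^{2}\sqrt{\ln(1/\delta)\ln(1/p)})$, $\epsilon/(B\sqrt{\ln(1/\delta)\ln(1/p)})$, and $1/n$ respectively, all $O(1)$. Summing the three terms then gives $|\pvar-\var|=O(B_1)$ with probability at least $1-p$.

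The main obstacle is essentially bookkeeping rather than a conceptual difficulty: the three error terms have genuinely different orders in $B,\epsilon,n$, and one must confirm that the single stated bound $B_1$ uniformly dominates them in the intended asymptotic regime. The quadratic $e_1^{\,2}$ contribution is the tightest match to $B_1$ (differing only by a factor $1/n$), while the two linear-in-noise contributions are actually $o(B_1)$, so the stated bound is loose in that direction; this looseness is the reason later lemmas (e.g., for $\pcov$) can reuse the same $B_1$-scale budget without further re-tuning. A $\chi^{2}$-style tail on $e_1^{\,2}$ can replace the squared Gaussian tail if one wants sharper constants, but it does not affect the $O$-level conclusion.
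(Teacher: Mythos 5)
Your proposal is correct and follows essentially the same route as the paper: expand $\pvar-\var$ into the linear, cross, and quadratic noise terms, apply a Gaussian tail bound plus union bound to $e_1,e_2\sim\mathcal{N}(0,\sigma^2/n^2)$, use $|\widehat{E[X]}|\leq B$ from the tuple-norm assumption, and absorb all three contributions into $B_1$ in the stated asymptotic regime. Your expansion $e_2-2\,\widehat{E[X]}\,e_1-e_1^{2}$ is in fact the correctly indexed version of the paper's (which swaps $e_1$ and $e_2$ relative to its own definitions), and your only other deviation is allocating failure probability $p/2$ per noise rather than the paper's $p/4$ (which it reserves so the covariance bound holds on the same event); neither affects the conclusion.
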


\begin{proof}
By assumption that each tuple's $\ell_2$ norm is bounded by $B$, each feature must also be bounded by $B$. Based on Gaussian tail bound, with probability at least $1-p/4$, $|e_i| \le \sigma\sqrt{2\ln(8/p)}/n$.
\begin{align*}
    |\var - \pvar| &= |e_1 -2e_2\sum x/n -e_2^2|\\
    &\le |e_1| + |2e_2\sum x/n| + |e_2^2|\\
    &\le \frac{\sigma\sqrt{2\ln(8/p)}}{n}\left(1 + 2B + \frac{\sigma\sqrt{2\ln(8/p)}}{n}\right)\\
    &=O\left(\frac{B^2\sqrt{\ln(1/\delta)\ln(1/p)}}{\epsilon n} + \frac{B^2\ln(1/\delta)\ln(1/p)}{\epsilon^2n^2}\right)\\
    &=O\left(\frac{B^4\ln(1/\delta)\ln(1/p)}{\epsilon^2 n}\right)
\end{align*}
\end{proof}

Similarly, \ref{lem:varxbound} can be used to derive the high probability bound on $\widetilde{\sigma^2_{xy}}$, that is $$|\cov - \pcov| \le B_2 = O\left(\frac{B^4\ln(1/\delta)\ln(1/p)}{\epsilon^2 n}\right)$$ 
Since the condition to satisfy both bounds coincide, with probability at least $1-p$, $|\widehat{\sigma^2_{x}} - \widetilde{\sigma^2_{x}}| \le B_1$ and $|\widehat{\sigma^2_{xy}} - \widetilde{\sigma^2_{xy}}| \le B_2$.

Let $$\tau_1 = B_1/\widehat{\sigma_x^2} = O\left(\frac{B^4\ln(1/\delta)\ln(1/p)}{\epsilon^2 n \var}\right) = \tau_2$$
When $\tau_1, \tau_2 < 1$ and with probability at least $1-p$, we may prove \ref{thm:pfactorized} as
\begin{proof}
\begin{align*}
    |\hat{\beta_x} - \tilde{\beta}_x| &= \left|\frac{\cov}{\var} - \frac{\pcov}{\pvar}\right| = \left|\frac{\cov}{\var} - \frac{\pcov}{\var} + \frac{\pcov}{\var} - \frac{\pcov}{\pvar}\right|\\
    &\leq \frac{|\pcov - \cov|}{\var} + \pcov\cdot \left|\pvar^{-1} - \var^{-1}\right|\\ \label{line:inverse}
    & = \frac{|\pcov - \cov|}{\var} + \pcov\cdot \frac{|\pvar - \var|}{\pvar \var}\\
    &\leq \tau_2 + (\cov + \tau_2\cdot\var)\frac{\tau_1\var}{(1-\tau_1)(\var)^2} \\
    &= \tau_2 + \frac{\tau_1}{1-\tau_1}\left(\frac{\cov}{\var} + \tau_2\right) = \tau_2 + \frac{\tau_1}{1-\tau_1}\left(\hat{\beta_x} + \tau_2\right)
\end{align*}
\end{proof}

\stitle{Extension to Factorized ML.}
The confidence bounds can be extended for factorized ML. The difference boils down to $B_1$, and the rest are the same. For union, let $R = R_1 \cup R_2 ... \cup R_k$ where $|R_i| = n$ and $k \rightarrow \infty$. Then, for $e_i \sim \calN(0, \sigma^2)$,

\begin{align*}
\widetilde{E[x^2]} = \frac{\sum_i^k(\sum x^2 + e_i)}{kn} \sim \widehat{E[X^2]} + \calN(0, \sigma^2/kn^2)
\end{align*}
Therefore, with probability at least $1-p/4$, $|\widetilde{E[X^2]} - \widehat{E[X^2]}| \le \sigma\sqrt{2\ln(8/p)}/\sqrt{k}n = O\left(\frac{B^2\sqrt{\ln(1/\delta)\ln(1/p)}}{\epsilon \sqrt{k}n}\right)$ (same for all other 3 moments $E[X], E[Y], E[XY]$), by minor changes in \ref{lem:varxbound}, yielding new bounds on $\tau_1$ and $\tau_2$ as 
\begin{align*}
    \tau_1 = B_1/\widehat{\sigma_x^2} &= O\left(\frac{B^2\sqrt{\ln(1/\delta)\ln(1/p)}}{\epsilon \sqrt{k}n\var} + \frac{B^4\ln(1/\delta)\ln(1/p)}{\epsilon^2kn^2\var}\right)\\
    &= O\left(\frac{B^4\ln(1/\delta)\ln(1/p)}{\epsilon^2 \sqrt{k}n\var}\right) = \tau_2
\end{align*}

For join, consider $R[x,y,J] = R_1[x,J] \Join R_2[y,J]$ and $d = |dom(J)|$ where $d \rightarrow n$. In contrast to union, there is additional noise added to the zero-th moment of each join key. i.e. the count of tuples within each join key. To avoid the scenario where this number is non-positive, an additional assumption is required~\cite{wang2018revisiting} that the noise is bounded by $o(n/d)$. Note that in the unbiased estimation, the privatized $s[c]$ is computed as 
\begin{align*}
    \widetilde{s[c]} &= \sum_{i \in J} (n/d +o(n/d))(n/d +o(n/d))\\ 
    &= \sum_{i \in J} (n/d +o(n/d))^2\\ 
    &= d(n/d +o(n/d))^2
\end{align*}

Then, for $e_{i, 1}, e_{i, 2}, e_{i, 3} \sim \calN(0, \sigma^2)$ defined as the Gaussian noise added to $\sum_{t \in R_1.i} x^2, \sum_{t \in R_1.i} x, \sum_{t \in R_2.i} y$ for each join key $i \in J$, with probability at least $1 - p/4$, $\sum_{i \in J}e_{i, j} \sim \calN(0, d\sigma^2)$ and $\sum_{i \in J}e_{i, j} \le \sigma\sqrt{2d\ln(8/p)} = O\left(\frac{B^2\sqrt{d\ln(1/\delta)\ln(1/p)}}{\epsilon}\right)$ for $j = \{1,2, 3\}$

\begin{align*}
\widehat{E[X^2]} &= \frac{\sum_{i\in J} (\sum_{t \in R_1.i} x^2)
n/d}{n^2/d} = \frac{\sum x^2}{n}\\
\widetilde{E[X^2]} &= \frac{\sum_{i\in J} \left((\sum_{t \in R_1.i} x^2) + e_{i, 1}\right)\left(n/d + o(n/d)\right)}{\sum_{j \in J} (n/d + o(n/d))(n/d + o(n/d))}
\end{align*}

By expanding $\widetilde{E[X^2]}$, we have
\begin{align*}
\widetilde{E[X^2]} &=\frac{(n/d)\sum x^2 + (n/d) \cdot \sum_{i \in J}e_{i, 1} + o(n/d)\sum x^2 + o(n/d)\sum_{i \in J}e_{i, 1}}{n^2/d + 2n\cdot o(n/d) + d \cdot o(n^2/d^2)}\\
&=\frac{(\frac{\sum x^2}{n} + (\sum_{i \in J}e_{i, 1})/n)(1  + o(n/d)(d/n))}{1 + 2(d/n)\cdot o(n/d) + (d^2/n^2) \cdot o(n^2/d^2)}\\
&=\frac{\frac{\sum x^2}{n} + (\sum_{i \in J}e_{i, 1})/n + o(1)(\frac{\sum x^2}{n} + (\sum_{i \in J}e_{i, 1})/n)}{1 + o(1)}\\
&=(1+o(1))\left(\frac{\sum x^2}{n} + (\sum_{i \in J}e_{i, 1})/n + o(1)(\frac{\sum x^2}{n} + (\sum_{i \in J}e_{i, 1})/n)\right)
\end{align*}

Hence
\begin{align*}
    |\widetilde{E[X^2]} - \widehat{E[X^2]}| &= O\left(\frac{\sum_{i \in J}e_{i, 1}}{n}\right)\\
    &=O\left(\frac{B^2\sqrt{d\ln(1/\delta)\ln(1/p)}}{\epsilon n}\right)
\end{align*}
Similarly, and based on $d \rightarrow n$
\begin{align*}
    |\widetilde{E[X]}^2 - \widehat{E[X]}^2| &= O\left(2\left(\frac{\sum x}{n}\right)\left(\frac{\sum_{i \in J}e_{i, 1}}{n}\right) + \left(\frac{\sum_{i \in J}e_{i, 1}}{n}\right)^2\right)\\
    &=O\left(\frac{B^3\sqrt{d\ln(1/\delta)\ln(1/p)}}{\epsilon n} + \frac{B^4d\ln(1/\delta)\ln(1/p)}{\epsilon^2 n^2}\right)\\
    &=O\left(\frac{B^4\sqrt{d}\ln(1/\delta)\ln(1/p)}{\epsilon^2 n}\right)
\end{align*}

By triangle inequality, we have 
\begin{align*}
    |\pvar - \var| &\le |\widetilde{E[X^2]} - \widehat{E[X^2]}| + |\widetilde{E[X]}^2 - \widehat{E[X]}^2|\\
    &= O\left(\frac{B^4\sqrt{d}\ln(1/\delta)\ln(1/p)}{\epsilon^2 n}\right)
\end{align*}
and
$$\tau_1 = O\left(\frac{B^4\sqrt{d}\ln(1/\delta)\ln(1/p)}{\epsilon^2 n\var}\right)$$

For $\widetilde{E[XY]}$ where $X \in R_1$ and $Y \in R_2$, the privatized $n$ in the unbiased estimation is computed as $$d\sqrt{\frac{\widetilde{s[c]}}{d}} = n + o(n) = O(n)$$

Thus, the privatized and non-privatized estimation of $E[XY]$ can be computed as
\begin{align*}
    \widehat{E[XY]} &= \frac{1-n}{1-d} \frac{\sum_{i \in J} \left(\sum_{t \in R_1.i}x\right)\left(\sum_{t \in R_2.i}y\right)}{n^2/d} \\
    & \quad + \frac{n-d}{1-d} \cdot \frac{\sum_{i \in J} \left(\sum_{t \in R_1.i}x\right)}{n^2/d}\cdot\frac{\sum_{i \in J} \left(\sum_{t \in R_2.i}y\right)}{n^2/d}\\
    \widetilde{E[XY]} &= \frac{1-(n+o(n))}{1-d} \frac{\sum_{i \in J} \left(\left(\sum_{t \in R_1.i}x\right) + e_{i,2}\right)\left(\left(\sum_{t \in R_2.i}y\right) + e_{i,3}\right)}{\sum_{j \in J} (n/d + o(n/d))(n/d + o(n/d))} \\
    & + \frac{n+o(n)-d}{1-d} \frac{\sum_{i \in J} \left(\left(\sum_{t \in R_1.i}x\right) + e_{i,2}\right) \sum_{i \in J}\left(\left(\sum_{t \in R_2.i}y\right) + e_{i,3}\right)}{\left(\sum_{j \in J} (n/d + o(n/d))(n/d + o(n/d))\right)^2}\\
    &=\frac{d(1-(n+o(n)))}{n^2(1-d)}\frac{\sum_{i \in J} \left(\left(\sum_{t \in R_1.i}x\right) + e_{i,2}\right)\left(\left(\sum_{t \in R_2.i}y\right) + e_{i,3}\right)}{1 + o(1)}\\
    & + \frac{d^2(n+o(n)-d)}{n^4(1-d)} \frac{\sum_{i \in J} \left(\left(\sum_{t \in R_1.i}x\right) + e_{i,2}\right) \sum_{i \in J}\left(\left(\sum_{t \in R_2.i}y\right) + e_{i,3}\right)}{\left(1 + o(1)\right)^2}
\end{align*}

Based on the same flow of logic as $|\widetilde{E[X^2]} - \widehat{E[X^2]}|$, we would like to bound $\sum_{i \in J} \left(\left(\sum_{t \in R_1.i}x\right)e_{i,3} + \left(\sum_{t \in R_2.i}y\right)e_{i,2} + e_{i,2}e_{i,3}\right)$. Note that 
\begin{align*}
    \sum_{i \in J}\left(\sum_{t \in R_2.i}y\right)e_{i, 2} &\le \frac{nB}{d}\sum_{i \in J}e_{i, 2} = O\left(\frac{ nB^3\sqrt{\ln(1/\delta)\ln(1/p)}}{\sqrt{d}\epsilon}\right)\\
    \sum_{i \in J}e_{i,2}e_{i,3} &\le \sigma\sqrt{\ln(8/p)}\sum_{i \in J}e_{i,2} = O\left(\frac{B^4\sqrt{d}\ln(1/p)\ln(1/\delta)}{\epsilon^2}\right)
\end{align*}

Hence the first two terms are bounded by $O\left(\frac{B^4\ln(1/\delta)\ln(1/p)}{\sqrt{d}\epsilon^2}\right)$. For the last term, we may also bound as
\begin{align*}
    \left(\sum_{i \in J}\sum_{t \in R_1.i}x\right)\sum_{i \in J}e_{i,3} &= O\left(\frac{nB^3\sqrt{d\ln(1/\delta)\ln(1/p)}}{\epsilon}\right)\\
    \sum_{i \in J}e_{i, 3}\sum_{i \in J}e_{i,2} &= O\left(\frac{B^4d\ln(1/\delta)\ln(1/p)}{\epsilon^2}\right)
\end{align*}

So the last term is $O\left(\frac{nB^3d\sqrt{d\ln(1/\delta)\ln(1/p)}}{\epsilon n^3} + \frac{B^4d^2\ln(1/\delta)\ln(1/p)}{\epsilon^2 n^3}\right)$, which can be combined as $O\left(\frac{B^4d^2\ln(1/\delta)\ln(1/p)}{\epsilon^2n^3}\right)$. Therefore $$|\widehat{E[XY]} - \widetilde{E[XY]}| = O\left(\frac{B^4\ln(1/\delta)\ln(1/p)}{\sqrt{d}\epsilon^2}\right)$$

Based on the similar analysis as $|\widetilde{E[X]}^2 - \widehat{E[X]}^2|$, we have $|\widetilde{E[X]}\widetilde{E[Y]} - \widehat{E[X]}\widehat{E[Y]}| =  O\left(\frac{B^4\sqrt{d}\ln(1/\delta)\ln(1/p)}{\epsilon^2 n}\right)$. This yields $$|\cov - \pcov | = O\left(\frac{B^4\ln(1/\delta)\ln(1/p)}{\sqrt{d}\epsilon^2}\right)$$

With an extra assumption that $X$ and $Y$ are 0-centered and each tuple within $R_1$ and $R_2$ is independent and the join key is uncorrelated with $X$ and $Y$. By the Chernoff-Hoeffding's inequality, with probability at least $1-p/4$, we have
\begin{align*}
    \left|\sum_{t \in R_1.i} x\right|, \left|\sum_{t \in R_1.i} y\right| &\le B\sqrt{2\ln(16d/p)n/d} \quad \forall i \in J
\end{align*}

This yields
\begin{align*}
    \sum_{i \in J}\left(\sum_{t \in R_2.i}y\right)e_{i, 2} &\le B\sqrt{2\ln(16d/p)n/d}\sum_{i \in J}e_{i, 2}\\
    &= O\left(\frac{ B^3\sqrt{n\ln(d/p)\ln(1/\delta)\ln(1/p)}}{\epsilon}\right)
\end{align*}
Giving a bound that scale with the size of the relation $$|\widehat{E[XY]} - \widetilde{E[XY]}| = O\left(\frac{B^4\ln(1/p)\ln(1/\delta)\sqrt{d\ln(d/p)}}{\epsilon^2\sqrt{n}}\right)$$

Putting everything together, with probability at least $1-p$, we have 
\begin{align*}
    \tau_2 &= O\left(\frac{B^4\ln(1/p)\ln(1/\delta)\sqrt{d\ln(d/p)}}{\epsilon^2\sqrt{n}\var}\right)
\end{align*}

\stitle{Extension to multi-features.}
The extension of our analysis to multi-dimensional features involves two modifications. Firstly, the bounds $B_1$ and $B_2$ are determined by matrix norm bounds through random matrix theory~\cite{vershynin2018high} instead of the absolute value of single random variable .
Secondly, the bound of the inverse of $\pvar$ is required, where $\pvar$ was scalar but now is a matrix; the inverse of $\pvar$ may become unboundedly large if its minimum eigenvalue is close to 0. To address this, Wang~\cite{wang2018revisiting} makes an additional assumption that the noises to $\var$ has a minimum eigenvalue $\lambda_{min}$ of $o(|\var|)$.
\section{Allocation of noises}
\label{app:noiseopt}

We analyze the implication of dynamic allocation of privacy budget for moments on linear regression confidence bound \cref{app:linear}. For union, it is possible to impute noise directly to $(\var)_i, (\cov)_i$, empirical variance, and covariance for each dataset $R_i$. Each of $(\var)_i, (\cov)_i$ has sensitivity $\Delta' = O(B^2/n)$. Thus, let $\sigma'=\sqrt{2\ln(1.25/\delta)}\Delta'/\epsilon$ and $$\pvar \sim \frac{\sum_{i}^k \left( (\var)_i + \calN(0, \sigma'^2)\right)}{k}$$

Applying gaussian tail bound and the independency assumption yields $|\pvar - \var| = O(B^2\sqrt{\ln(1/\delta)\ln(1/p)}/\epsilon\sqrt{k}n)$, and $|\pcov - \cov| = O(B^2\sqrt{\ln(1/\delta)\ln(1/p)}/\epsilon\sqrt{k}n)$. This reduces the bound on $\tau_1$ and $\tau_2$ by a factor of $O(B^2\sqrt{\ln(1/\delta)\ln(1/p)}/\epsilon)$.
Based on \cref{sensitivity}, consider the query $q_{j, i}: \calD^n \longrightarrow \mathcal{S}^i$  where $\mathcal{S}^i = \{v \ | \ v \in \mathbb{R}^i\}$. $q_{j, i}$ returns a vector $s_i \in \mathcal{S}^i$ containing the sum of the $i$-order monomials across each join key. $$\Delta_{q_{j,i}} = \sqrt{1_{2\mathbb{Z} + 1}(i)4B^{2i} + 1_{2\mathbb{Z}}(i)2B^{2i}}$$

For linear regression, it is feasible to decomposite $q_j$ into 3 sequential queries, $q_{j, 0}$, $q_{j, 1}$ and $q_{j, 2}$, each with privacy budget $(\epsilon/3, \delta/3)$. Inheriting notations from \cref{app:linear},  
$\Delta = \Delta_{q_j}$, $O(B^2\Delta_{q_{j, 0}}) = O(B\Delta_{q_{j, 1}}) = 
O(\Delta_{q_{j, 2}}) = O(B^2)$, note that although there is less privacy budget on releasing the count of tuples within each join key, the sensitivity is also reduced by a magnitude of $B^2$, i.e. from $\Delta$ to $\Delta_{q_{j 0}}$. Hence, it is reasonable to assume that the noise on this number is small, and bounded by $o(n/d)$. The main implication is that $e_{i, 2}, e_{i, 3} = \calN(0, 2\ln(1.25/(\delta/3))\Delta^2_{q_{j, 1}}/(\epsilon / 3)^2)$, and $e_{i, 1} = \calN(0, 2\ln(1.25/(\delta/3))\Delta^2_{q_{j, 2}}/(\epsilon / 3)^2)$, and no more change to the analysis is required. Following the computations in \cref{app:linear}, we have $$\tau_1 = O(\frac{B^2\sqrt{d}\ln(1/\delta)\ln(1/p)}{\epsilon^2 n \var}), \tau_2 = O(\frac{B^2\ln(1/\delta)\ln(1/p)}{\sqrt{d}\widehat{\sigma_x^2}})$$

\end{document}